\newcommand{\trp}{\textsc{trp}}
\newcommand{\texp}{\textsc{texp}}
\newcommand{\autg}{\ensuremath{\text{Aut}(G)}}
\newcommand{\auttg}{\ensuremath{\text{Aut}(\mathcal{G})}}
\newcommand{\torb}{\ensuremath{\mathcal{G}/\auttg}}
\long\def\full#1{
{#1}%
}
\long\def\conf#1{%
{}%
}
\long\def\app#1{%
{}%
}
\title{Exploiting Automorphisms of Temporal Graphs for Fast Exploration and Rendezvous}
\author{Konstantinos Dogeas}{Department of Computer Science, Durham University, Durham, England}{konstantinos.dogeas@durham.ac.uk}{https://orcid.org/????-????-????-????}{}
\author{Thomas Erlebach}{Department of Computer Science, Durham University, Durham, England}{thomas.erlebach@durham.ac.uk}{https://orcid.org/0000-0002-4470-5868}{}
\author{Frank Kammer}{THM, University of Applied Sciences Mittelhessen,
Gie\ss en, 
Germany}{frank.kammer@mni.thm.de}{https://orcid.org/0000-0002-2662-3471}{}
\author{Johannes Meintrup}{THM, University of Applied Sciences Mittelhessen, 
Gie\ss en, 
        Germany}{johannes.meintrup@mni.thm.de}{https://orcid.org/0000-0003-4001-1153}{Funded by the Deutsche Forschungsgemeinschaft 
        (DFG, German Research Foundation) -- 379157101.}
\author{William K. Moses Jr.}{Department of Computer Science, Durham University, Durham, England}{william.k.moses-jr@durham.ac.uk}{https://orcid.org/0000-0002-4533-7593}{}
\authorrunning{K. Dogeas, T. Erlebach, F. Kammer, J. Meintrup, and W.K. Moses Jr.}
\authorrunning{\ } 
\keywords{dynamic graphs, parameterized algorithms, algorithmic graph
theory, graph automorphism, orbit number} 
\begin{document}
\fontsize{11}{13}\selectfont

\maketitle

\begin{abstract} 
Temporal graphs are dynamic graphs where the edge set can change in each time
step, while the vertex set stays the same. Exploration of temporal graphs whose
snapshot in each time step is a connected graph, called connected temporal
graphs, has been widely studied. In this paper, we extend the concept of graph
automorphisms from static graphs to temporal graphs and show for the first
time that symmetries enable faster exploration: We prove that a connected temporal
graph with $n$ vertices and orbit number $r$ (i.e., $r$~is the number of
automorphism orbits) can be explored in $O(r n^{1+\epsilon})$ time steps, for any
fixed $\epsilon>0$. For $r=O(n^c)$ for constant $c<1$, this is a significant
improvement over the known tight worst-case bound of $\Theta(n^2)$ time steps
for arbitrary connected temporal graphs. We also give two lower bounds for
temporal exploration, showing that $\Omega(n \log n)$ time steps are required
for some inputs with $r=O(1)$ and that $\Omega(rn)$ time steps are required for
some inputs for any $r$ with $1\le r\le n$.

Moreover, we show that the techniques we develop for fast exploration can be
used to derive the following result for rendezvous: Two agents with different
programs and without communication ability are placed by an adversary at
arbitrary vertices and given full information about the connected temporal
graph, except that they do not have consistent vertex labels. Then the two
agents can meet at a common vertex after $O(n^{1+\epsilon})$ time steps, for any
constant $\epsilon>0$. For some connected temporal graphs with the orbit number
being a constant, we also present a complementary lower bound of $\Omega(n\log
n)$ time steps. 
\full{{Finally, we give a randomized algorithm to construct a temporal
walk $W$ that visits all vertices of a given orbit with probability at least $1-\epsilon$
for any $0<\epsilon<1$
such that $W$ spans $O((n^{5/3}+rn)\log n)$ time steps. The runtime of the
algorithm consists of $O(n^{1/3} \log (n/\epsilon))$ linear-time
{considerations} of the
snapshots that exist in the aforementioned time span.
We also give an alternative algorithm that has an expected runtime
consisting of $O(n^{1/3})$ linear-time {considerations}, but provides the specified
temporal walk with probability $1$.
}}
\end{abstract}

\section{Introduction}
\label{sec:int}

For many decades, graph theory has been a tool used to model and study many real
world problems and phenomena~\cite{BondyM76}. A usual assumption for many of
these problems is that the graphs have a fixed structure. However, there are
quite a number of cases where the structure of a system changes over time. For
example, consider the problem of routing in transportation networks (roads,
rails) where specific connections can become unavailable (due to a disaster) or
they are active during specific times (due to safety). Such scenarios can be
modeled using temporal graphs, a sequence of graphs over the same vertex set
where the edges possibly change in each time step. The temporal graph setting
has received significant interest from the research community in the recent past
as seen in recent surveys~\cite{CasteigtsFQS12,Michail16}. 

In this paper, we study two problems on temporal graphs. The first is the
\textit{temporal exploration problem} ($\texp$), which has been studied, e.g., by Michail and
Spirakis~\cite{MichailS14,MichailS16} and by Ilcinkas et
al.~\cite{IlcinkasKW14}. It requires an agent to explore all vertices of the
temporal graph as quickly as possible. The second is the \textit{temporal
rendezvous problem} ($\trp$), which we formulate for the first time in this
paper. It requires two heterogeneous agents (in terms of the programs they run)
to rendezvous on a temporal graph when they cannot communicate with one another.
For both problems, we assume that each  agent has complete knowledge of the
temporal graph in advance (a common
assumption~\cite{AdamsonGMZ22,BodlaenderV19,ErlebachHK21,ErlebachKLSS19,ErlebachS22a,ErlebachS22b,IlcinkasKW14,MichailS14,MichailS16,Taghian20}).
However, in the case of $\trp$ the agents may have different names for the same
vertices, i.e., the local labels of the vertices may be different. 

The problem of exploration has been well studied in the static setting since it was introduced in 1951 by Shannon~\cite{Shannon51}.  
It has also been intensively studied in the temporal graph setting since
2014 (see all references from the previous paragraph).
On an application-oriented note, the problem captures the setting where a person
is trying to visit various parts of a city using 
public
transportation. For example, train schedules involve multiple train
stations (vertices) with trains running between them at different times (i.e.,
repeatedly
changing edges). Thus, planning a visit to multiple destinations
over a given day using railways is an example of solving $\texp$.

The rendezvous problem can be broadly categorized into two types: symmetric and
asymmetric rendezvous. The version where agents have the same strategy
(symmetric rendezvous) was introduced by Alpern~\cite{Alpern76}. The version
where agents can have distinct strategies (asymmetric rendezvous) was introduced
by Alpern~\cite{Alpern95} and is the focus of research in this paper. 
$\trp$ is a natural
extension of the asymmetric rendezvous problem to the dynamic setting. As a real
world example, consider a pair of tourists who want to explore a city together
and have to agree on a strategy to meet up in case they are separated and their
cell phones die. In this scenario, they may use public transportation
(dynamically changing network) to meet and agree in advance to use different
strategies that guarantee that they meet quickly. 

In this paper, we present results that extend the literature in two ways.
Firstly, we formalize the \trp{} problem in the setting where agents have
complete knowledge of the temporal graph a priori and develop good upper and
lower bounds for it. 
Secondly, we
utilize interesting structural properties (namely automorphisms of a graph and
associated orbits of  vertices) in order to analyze 
bounds on the number of time steps required by temporal walks that solve certain problems.
In particular, we develop upper and lower bounds for both \texp{} and \trp{} that leverage the 
aforementioned graph properties. To the best of our knowledge, this is the first work that takes 
advantage of these graph properties to study problems in temporal graphs. 

\subsection{Our Contributions}
\label{subsec:our-contribs}
We present results for two problems: first, the \textsc{temporal exploration
problem} ($\texp$) which, for a given temporal graph $\mathcal{G}$, asks for a
temporal walk that visits all vertices of $\mathcal{G}$, and secondly, the
\textsc{temporal rendezvous problem} (\trp), which considers two agents that try
to meet in the given temporal graph, meaning they must be stationed at the same
vertex in the same time step.

One of our primary contributions is formalizing the problem of $\trp$ in
Section~\ref{sec:pre}, and in doing so extending the problem of asymmetric
rendezvous to the temporal graph setting where agents have complete knowledge of
the temporal graph in advance. {Another significant contribution is that we
show} how to leverage the use of a structural graph property, namely the
automorphism group of a temporal
graph and the associated notion of orbits, to bound the number of time steps
required by algorithms we devise for these problems. To the best of our
knowledge, this work is the first instance of leveraging such properties to
study any problem on temporal graphs apart from a practical
implementation of a generator for listing all non-isosomorphic simple temporal
graphs~\cite{Casteigts20}. Intuitively, an automorphism of a graph is a mapping
from the set of vertices of the graph to the same set of vertices that preserves
the
neighborhood relation between the vertices. The set of automorphisms of a temporal graph
consists of the \textit{intersection} of the sets of automorphisms of the
graph at each time step. The set of automorphisms of the temporal graph,
along with function composition as group operation, forms the automorphism group of the
temporal graph. An orbit of the automorphism group of a temporal graph is a maximal set
of vertices such that each vertex can be mapped to any other vertex in the set via
an automorphism of the group.
Intuitively, the vertices in an orbit look indistinguishable
to an agent that has full information about the temporal graph but
without meaningful vertex labels. As the agents are able to compute a
consistent numbering of the orbits (Lemma~\ref{lem:meeting}), they can agree to meet
in some specific orbit, but not at a specific vertex.
{Therefore, temporal graphs where all orbits are large
(or that even have a single orbit containing all vertices)
appear to be the most challenging graphs for solving
$\trp$. Our result providing fast exploration schedules
in temporal graphs with few orbits is therefore a
crucial ingredient for enabling our solution to $\trp$ to handle
all possible temporal graphs (including those with a single
orbit).}

%
%
%

We give precise definitions and present further preliminaries in
Section~\ref{sec:pre}. Then,
we introduce some useful
utilities related to automorphisms in Section~\ref{sec:aut-util}. These
will be used in later sections, where we present the following
results. 

\begin{sloppypar}
\textbf{Upper bounds.} 
In Section~\ref{sec:texp}, we develop a deterministic algorithm to solve $\texp$
in $O(r n^{1+\epsilon})$ time steps for any fixed $\epsilon >0$ (see
Corollary~\ref{cor:epsilonallorbits}), where $n$ is the number of vertices in
the temporal graph and $r$ is the number of orbits of the automorphism group of
the temporal graph. Note that $r$ can range in value from $1$ to $n$. Thus, for
$r=O(n^c)$ for constant $c<1$, this is a significant improvement over the known
tight worst-case bound of $\Theta(n^2)$ time steps for
arbitrary connected temporal graphs~\cite{ErlebachHK21}. 
In Section~\ref{sec:rendezvous}, {we leverage this algorithm for $\texp$ to
develop a deterministic solution for $\trp$ using} $O(n^{1+\epsilon})$ time
steps for any fixed $\epsilon > 0$ (see Theorem~\ref{thm:rendezvous-time}). 
Our focus is on bounding the time steps of the temporal walks required to
solve $\texp$ and $\trp$, and not on \mbox{optimizing the running time of the
respective algorithms to compute such walks.}
\end{sloppypar}

\textbf{Lower bounds.} 
We complement our algorithms with lower bounds for both $\texp$ and $\trp$ in
Section~\ref{sec:lower-bounds}. In particular, we design an instance of $\trp$
such that any solution for it requires $\Omega(n \log n)$ time steps (see
Theorem~\ref{thm:rendezvouslb}). We then show how this translates to a lower
bound of $\Omega(n \log n)$ time steps for {some instances of $\texp$ where
the temporal graph at hand has an orbit number $r=O(1)$ (see
Corollary~\ref{cor:explorationlb})}. By revisiting the lower bound of
\cite{ErlebachHK21} for $\texp$, which focused on arbitrary temporal graphs, and
studying it through the lens of automorphisms and orbits, we can obtain a more
fine-grained lower bound of $\Omega(rn)$ time steps (see
Lemma~\ref{lem:rn-lower-bound}) for {some temporal graphs with orbit number
$1\le r \le n$.} Notice that the {multiplicative} gap between our upper and lower bounds for both
problems is only {a factor of} $O(n^\epsilon)$.

{ \textbf{Relevance of the orbit number.} {It is well
known in graph theory}
    that almost all (static) graphs are \textit{rigid}~\cite{BallGS18},
    meaning that the automorphism group contains only the trivial
    identity function. This is in contrast to observations in practice.
    Many real-world graphs have non-trivial automorphisms. A recent
    analysis of real-world graphs in the popular database
    \texttt{networkrepository.com} showed that over $70\%$ of the analyzed
    graphs had non-trivial automorphisms~\cite{BallGS18}. One may
    reasonably expect that real-world temporal graphs have similar
    properties. Symmetries are also abundant in graphs arising in
    chemistry~\cite{Balasubramanian82}, which have been studied in temporal
    settings as well~\cite{PlamperLH23}.
	{We believe that these observations provide a strong
	motivation for studying temporal graph problems for temporal
	graphs with fewer than $n$ orbits,{\ e.g., with
    algorithms parameterized via the number of orbits.}}
	{Furthermore, we emphasize that
    our results for $\trp$ {hold for all temporal graphs,}
	independent of the orbit number parameter,
even {though} they {utilize} techniques we develop for $\texp$
parameterized by the orbit number.} {Roughly speaking,
for orbit number~$r$, the penalty factor $r$ in the number of time steps 
to solve $\texp$ is saved when solving $\trp$ by focusing only on a smallest orbit of size at most $n/r$.
}
}

\subsection{Technical Overview and Challenges}
\label{subsec:tech-overview-challenges}

Firstly, we give an intuitive overview of our upper bound results. The concepts
used in this section are more precisely defined in the next sections.

For $\texp$, we first consider the problem of visiting all the vertices of one
orbit~$S$. One key insight is that, if we have a temporal walk $W$ that visits
$k$ vertices of~$S$, we can use the automorphisms of the temporal graph to
transform $W$ into other walks that visit different sets of $k$ vertices of~$S$
(Lemma~\ref{lem:transform1}).
Therefore, even if the $k$ vertices visited by $W$ have already been explored
earlier, we can transform $W$ into a temporal walk $W'$ that visits a ``good''
number of previously unexplored vertices of~$S$. The number of previously
unexplored vertices of $S$ that $W'$ is guaranteed to visit increases with the
number of possible start vertices in $S$ that we allow for $W'$, but the larger
that set $X$ of possible start vertices is, the longer it may take the agent to
reach the best start vertex in that set. A challenge is to analyze this
tradeoff. By carefully relating the size of $X$ to the guaranteed number of
unexplored vertices that a walk $W'$ starting at a vertex in~$X$ can visit
(Corollary~\ref{cor:transform2}), we
manage to balance the number of time steps needed to move to the start vertex of
$W'$ and the number of previously unexplored vertices that $W'$ visits.
To show that vertices of $X$ can be reached quickly, we study the structure
of edges that connect vertices in different orbits (Lemma~\ref{lem:orbitedges})
and use it to analyze
reachability between orbits (Lemmas~\ref{lem:laneorbit} and~\ref{lem:nextreachable}).

We then employ a recursive construction: We recursively construct a temporal
walk {$W_{1}$ that} visits $k$ vertices of~$S$, concatenate
$W_{1}$ with a temporal walk that moves quickly from the endpoint of
$W_{1}$ to a good start vertex in $S$ for what follows, and then use a second recursively constructed walk
(transformed via an automorphism to a ``good'' temporal walk
{$W_{2}$} starting in a vertex of $S$) to visit ``nearly'' $k$
further vertices of~$S$. {A careful} analysis then shows that in this way
we can visit a constant fraction of the vertices of $S$ in
$O(r|S|^{1+\epsilon})$ steps (Lemma~\ref{lem:vtexplorationub} and Corollary~\ref{cor:tinyfraction}).
We then show that the concatenation of $O(\log
|S|)$ such walks (each one again transformed via an automorphism to maximize the
number of newly explored vertices) suffices to visit all vertices of $S$ in
$O(r|S|^{1+\epsilon}+n\log |S|)$ time steps (Theorem~\ref{thm:epsilonexploration}),
which can be bounded by $O(r
n^{1+\epsilon'})$ time steps). By visiting the $r$ orbits one after another, we
can finally show that the whole temporal graph can be explored in
$O(r n^{1+\epsilon'})$ time steps (Corollary~\ref{cor:epsilonallorbits}).

For $\trp$, a simple and fast solution one first thinks of is to have the agents
simply meet at a vertex with a specific label. {However}, this is not
feasible in the model we consider. In particular, we assume that the agents
cannot communicate 
and, while both agents have complete information about the
temporal graph, they do not have access to consistent vertex labels. As such,
the agents are unable to agree upon the same vertex based on the vertex label.
We rely, instead, on structural graph properties and let the agents meet at a
vertex in a smallest orbit: One agent moves to any vertex in that orbit, and the
other searches all vertices in that orbit. The first challenge is for the agents
to independently identify the same orbit in which to meet. We show that this can
be done by letting each agent enumerate all temporal graphs, together with
colorings of their orbits, until it encounters for the first time a temporal
graph that is isomorphic to the input graph in which $\trp$ is to be solved. In
this way both agents obtain the same colored temporal graph and can
independently select, among all orbits of smallest size, the one with smallest
color (Lemma~\ref{lem:meeting}).
Then one agent moves to a vertex in that orbit~$S$, while the other searches all
vertices of~$S$. The challenge here is to deal with orbits that are large,
because previously known techniques would require $\Theta(n^2)$ time steps to
explore an orbit of size $\Omega(n)$. {Fortunately}, this is where our
above-mentioned results for exploration of (orbits of) temporal graphs come to
the rescue: As $|S|\le n/r$, we have $O(r|S|^{1+\epsilon}+n\log
|S|)=O(n^{1+\epsilon'})$, and hence $S$ can be explored (and $\trp$ solved) in
$O(n^{1+\epsilon'})$ time steps (Theorem~\ref{thm:rendezvous-time}).

Now we turn to our lower bounds.
For $\texp$, we first observe that the
existing lower bound construction that shows that $\Omega(n^2)$ time steps are
necessary for exploration on some temporal graphs uses temporal graphs with
$\frac{n}{2}+1$ orbits. By slightly varying the construction, we can show
for any $r$ in the range from $1$ to $n$
that there are temporal graphs with $r$ orbits that require
$\Omega(rn)$ time steps for exploration (Lemma~\ref{lem:rn-lower-bound}).
Our main lower bound result is the lower bound of
$\Omega(n\log n)$ for $\trp$ in temporal graphs with a single orbit
(Theorem~\ref{thm:rendezvouslb}). The
temporal graph is a cycle {on $n=2^m-1$ vertices} in every step, and the
edges change every $\lfloor n/16\rfloor$ time steps. Each period of $\lfloor
n/16\rfloor$ time steps in which the edges do not change is called a
\emph{phase}. We number the vertices of the cycle from $0$ to $n-1$ and consider
the binary representation of the vertex labels. {In phase~$1$, each vertex
$j$ is adjacent to $j+1$ and $j-1$, while in any phase $i>1$, it is adjacent to
$j+2^{2i}$ and $j-2^{2i}$ (with all computations done modulo~$n$).} As all
vertices are indistinguishable to the agents, we can force each agent to make
the same movements no matter where we place it initially. By fixing the five
highest-order bits of the agents' start positions in a certain way, we can
ensure that the agents do not meet in the first phase, no matter how the
remaining bits of their start positions are chosen. Depending on the positions
where the agents end up at the end of each phase (relative to their start
position in that phase), {we fix a certain number of bits of the binary
representations of the start positions of the agents: After the first phase, we
fix the lowest four bits, and after each further phase, we fix the next-higher
two bits. We can show that this is sufficient to ensure} that the agents start
the next phase at vertices that are sufficiently far from each other in the
cycle of that phase. This process can be repeated $\Omega(\log n)$ times,
showing that it takes $\Omega(n \log n)$ {time steps} for the agents to meet. This lower
bound for $\trp$ implies also that exploration of the constructed temporal graph
requires $\Omega(n \log n)$ steps (Corollary~\ref{cor:explorationlb}).

\full{In Section~\ref{sec:rand} we present a randomized algorithm for
    computing a temporal walk $W$ that visits all vertices of a given
    orbit with probability at least $1-\epsilon$ for any $0<\epsilon<1$.
    The computed temporal walk $W$ spans $O((n^{5/3}+rn)\log n)$ time
    steps, with $r$ the number of orbits of the automorphism group. The
    walk can be computed by $O(n^{1/3} \log (n/\epsilon))$ linear-time
    scans of the snapshots that exist in the aforementioned time span. We
    also give an alternative algorithm that has an expected runtime that
    consists of $O(n^{1/3})$ linear-time scans, but provides the specified
    temporal walk with probability $1$. In contrast to the upper bounds
    presented in the preceding sections, the algorithm has a runtime that
is independent of the number of automorphisms of the temporal graph at
hand.}

\subsection{Related Work}
\label{subsec:related-work}

There is a divide in the research on temporal graphs with respect to the
amount of knowledge known in advance by the agents on the graph.  When
complete knowledge of the temporal graph is known in advance to the agents,
as is assumed in this paper, the setting is
sometimes referred to as the \emph{post-mortem}
setting (see, e.g., Santoro~\cite{Santoro15}). Since we consider
scenarios where agents plan temporal walks using advance knowledge
of future time steps, we refer to the post-mortem
setting as the {\em clairvoyant} setting in the remainder of this section.

The clairvoyant setting is in contrast to the \emph{live} setting where agents only have partial
knowledge of the temporal graph, and the solutions to problems in these
different settings are of a different nature.  The reader can refer to the
survey by Di Luna~\cite{DiLuna19} for more information on problems and
solutions in the live setting.  We restrict the rest of this related work
section to work in the clairvoyant setting.

~\\ \noindent \textbf{Exploration.} 
%
%
We trace the progress of solving instances of $\texp$, first by identifying
which assumptions are required for the problem to even be solvable, and then by
identifying properties that were leveraged to give faster and faster solutions. 

Michail and Spirakis~\cite{MichailS14,MichailS16} showed that it is
\textbf{NP}-complete to decide whether a given temporal graph with a given
\textit{lifetime}, i.e., the number of time steps it exists, is explorable when
no assumptions are made on the input graph. This holds even if the graph is connected in
every time step, termed \textit{connected} (and sometimes called \emph{always-connected}
in the literature).
Even when a restriction is placed on the \textit{underlying graph}, i.e., the
union of the graphs at each time step, such that the underlying graph has
pathwidth at most~$2$, Bodlaender and van der Zanden~\cite{BodlaenderV19} showed
that $\texp$ is \textbf{NP}-complete. 

However, the exploration is always possible when the lifetime of the graph is
sufficiently large. In particular, Michail and Spirakis~\cite{MichailS16} showed
that a connected graph with $n$ vertices may be explored in $O(n^2)$ time steps. For
the rest of the related work on $\texp$, we focus on the case of connected
temporal graphs with a sufficiently large lifetime. Erlebach et
al.~\cite{ErlebachHK21} showed that exploration on arbitrary temporal graphs
takes $\Omega(n^2)$ time steps. By restricting their study of temporal
graphs to those where the underlying graph belongs to a special graph class,
however, they showed that $\texp$ can be solved in $o(n^2)$ time steps in
several such cases. In particular,
when the underlying graph is planar, has bounded treewidth~$k$, or is a $2
\times n$ grid, they showed that $\texp$ can be solved in $O(n^{1.8} \log n)$
time steps, $O(n^{1.5} k^{1.5} \log n)$ time steps, and $O(n \log^3 n)$ time
steps, respectively. They also showed a lower bound of $\Omega(n \log n)$ when
the underlying graph is a planar graph of degree at most $4$. 

The study of $\texp$ when the temporal graph is restricted continued in
several papers. Taghian Alamouti~\cite{Taghian20} showed that $\texp$ can be
solved in $O(k^2 (k!) (2e)^k n)$ time steps when the underlying graph is a cycle
with $k$ chords. Adamson et al.~\cite{AdamsonGMZ22} improved this to $O(kn)$
time steps. They also improved the upper bounds on $\texp$ for underlying graphs
that have bounded treewidth $k$ or are planar to $O(k n^{1.5} \log n)$ and
$O(n^{1.75} \log n)$, respectively. In addition, they strengthened the lower bound for
underlying planar graphs by showing that even if the degree is at most~$3$, the
lower bound is $\Omega(n \log n)$. Erlebach et al.~\cite{ErlebachKLSS19} further
improved work on bounded degree underlying graphs by showing that  $\texp$ can
be solved in $O(n^{1.75})$ time steps for such temporal graphs.
Ilcinkas et al.~\cite{IlcinkasKW14}
showed that when the underlying graph is a cactus, the exploration time is
$2^{\Theta(\sqrt{\log n})} n$ time steps. 

Other variants of $\texp$ have been studied where the problem is slightly
different, or the edges of the temporal graph vary in some particular way (e.g.,
periodically, $T$-interval connected, $k$-edge deficient), or multiple agents
explore the
graph~\cite{AaronKM14a,AaronKM14b,AkridaMSR21,BumpusM23,ErlebachS22a,ErlebachS22b,IlcinkasW13}.

~\\ \noindent \textbf{Rendezvous.} 
%
Symmetric rendezvous~\cite{Alpern76} and asymmetric rendezvous ~\cite{Alpern95}
have received much interest over the years, resulting in numerous
surveys~\cite{Alpern02,AlpernG06,Flocchini19,KranakisKM22,KranakisKR06,Pelc12,Pelc19}
being written on the problems covering different settings. 
In this work, we are
the first to extend asymmetric rendezvous to temporal graphs in the clairvoyant
setting. Note that there has been previous
work~\cite{BournatDP18,DasDPP19,DiLunaFPPSV20,MichailST21,OoshitaD18,ShibataKESNK23,ShibataSNK21}
that has studied multi-agent rendezvous (also called gathering) in a dynamic
graph setting, but that was in the live setting. 
For an overview of rendezvous in static graphs we refer the reader to~\cite{AlpernG06,
KranakisKR06}. Rendezvous has been studied in deterministic
settings~\cite{BhagatP22,Pelc23} and asynchronous
settings~\cite{dieudonne2013meet,MARCO2006315}, and has been shown to be
solvable in logarithmic space~\cite{CzyzowiczJKAP10}.

~\\ \noindent \textbf{Other Related Work.}  
Other problems have also been studied in the temporal graph setting, e.g.,
matchings~\cite{MertziosMNZZ20},
separators~\cite{FLUSCHNIK2020197,ZSCHOCHE202072}, vertex
covers~\cite{AKRIDA2020108,HammKMS22}, containments of
epidemics~\cite{ENRIGHT202160}, Eulerian tours~\cite{BumpusM23,MarinoS21}, graph
coloring~\cite{MarinoS22,MertziosMZ21}, network flow~\cite{AKRIDA201946},
treewidth~\cite{Fluschnik2020} and cops and robbers~\cite{Erlebach-Spooner/20,MorawietzRW20}. See
the survey by Michail~\cite{Michail16} for more information.

\section{Preliminaries}
\label{sec:pre}

\textbf{Basic terminology.} We use standard graph terminology. All static graphs
are assumed to be simple and undirected. We write $[x]=\{1, \ldots, x\}$ for any
integer $x$.

A \textit{temporal graph} $\mathcal{G}=(G_1, \ldots, G_{\ell})$ is a sequence of
static graphs, all with the same vertex set~$V$. For a temporal graph
$\mathcal{G}$, the graph $G=(V, \bigcup_{t \in [\ell]} E(G_t))$ is the
\textit{underlying graph of $\mathcal{G}$}. We call $\mathcal{G}$ connected if
each $G_t$ for $t\in [\ell]$ is connected. We use $n$ to refer to the number of
vertices of the (temporal) graph under consideration. A \textit{temporal walk}
$W$ is a walk in a temporal graph $\mathcal{G}$ that is time respecting, i.e.,
traverses edges in strictly increasing time steps. For a temporal walk $W$
starting at time step $t$ we write $W=(u_1, u_2, u_3, \ldots)$ to mean the
temporal walk starts at vertex $u_1$ in time step $t$, {is at vertex $u_2$
at the beginning of time step $t+1$,} and so forth. Note that subsequent
vertices in the temporal walk can be the same vertex, i.e., we can wait for an
arbitrary number of time steps at a vertex. We say that a temporal walk
\emph{spans} $x$ time steps if it starts at some time step $t$ and ends at some
time step $t'$ with $t'-t=x$. 
We assume that the lifetime $\ell$ is large enough
such that a desired temporal walk can be constructed. {For all our results
it is enough to assume $\ell \ge n^2$, as $n^2$ time steps suffice for
$\texp$~\cite{ErlebachHK21} and therefore also for $\trp$.} For any function $f:
X \rightarrow X$ for a universe $X$ we write $f^i(x)$ when we mean applying the
function $i$-times iteratively for any $x \in X$ and integer $i$, e.g.,
$f^2(x)=f(f(x))$. {We use $\circ$ to denote function composition, i.e.,}
for two functions $f: X \rightarrow X$ and $g: X \rightarrow X$, we denote with
$f \circ g$ the function that maps $x\in X$ to $f(g(x))$.

~\\ \noindent \textbf{Isomorphisms and automorphisms.} Two static graphs $G$ and $H$ are
\textit{isomorphic} exactly if a bijection $\theta: V(G) \rightarrow V(H)$
(called an \textit{isomorphism}) exists with the following property: two
vertices $u, v$ are adjacent in $G$ exactly if $\theta(u)$ is adjacent to
$\theta(v)$ in $H$. We write $G\cong H$ if $G$ is isomorphic to $H$. An
\textit{automorphism} is an isomorphism from a graph $G$ to itself. The set of
all automorphisms of a graph $G$ {forms a group $\autg$, with $\circ$ as
group operation.} Refer to~\cite{GodsilR01,KnauerK19,LauriS16} for further reading on the topic
of isomorphisms and automorphisms of graphs.

For a temporal graph $\mathcal{G}$ with lifetime $\ell$ we denote with $\auttg$
the set of all functions $\sigma$ such that $\sigma$ is an automorphism of each
graph $G_t$ at every time step $t \in [\ell]$ {(and hence also an
automorphism of the underlying graph~$G$). $\auttg$ with $\circ$ as group
operation is the automorphism group of the temporal graph~$\mathcal{G}$.}
\full{See Fig.~\ref{fig:automporphism-temporal-graph} for an illustration. }%
We remark that the automorphism group of the temporal graph~$\mathcal{G}$
is the intersection of the automorphism groups of the graphs
$G_t$ for $t\in[\ell]$ and that the automorphism group of
$\mathcal{G}$ can be viewed as the automorphism group of the underlying graph
$G$ of $\mathcal{G}$ with edge labels such that each edge is labeled with the set
of time steps in which it occurs.

\full{
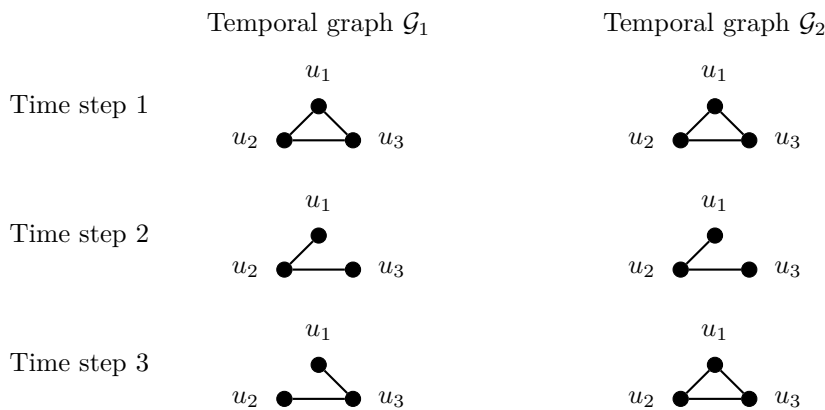
\begin{figure}[hb!]
    \centering
    \begin{tikzpicture}[vertex/.style={circle,draw,inner sep=0pt,minimum size=0.2cm}]
        
        \node[vertex,fill] (S1) at (2,0){}; 
        \node[vertex,below left=0.3cm and 0.3cm  of S1,fill] (S2){};
        \node[vertex,below right=0.3cm and 0.3 cm of S1, fill] (S3){};

        \path[draw,thick]
        (S1) edge (S2)
        (S1) edge (S3)
        (S2) edge (S3)
        ;

        \node[above=0.7cm of S1] (O1) {Temporal graph $\mathcal{G}_2$};
        \node[above=0.1cm of S1] (O1) {$u_1$};
        \node[left=0.1cm of S2] (O1) {$u_2$};
        \node[right=0.1cm of S3] (O1) {$u_3$};

        \node[vertex,left=5cm of S1,fill] (X1){}; 
        \node[vertex,below left=0.3cm and 0.3cm  of X1,fill] (X2){};
        \node[vertex,below right=0.3cm and 0.3 cm of X1, fill] (X3){};

        \path[draw,thick]
        (X1) edge (X2)
        (X1) edge (X3)
        (X2) edge (X3)
        ;

        \node[above=0.7cm of X1] (O1) {Temporal graph $\mathcal{G}_1$};
        \node[above=0.1cm of X1] (O1) {$u_1$};
        \node[left=0.1cm of X2] (O1) {$u_2$};
        \node[right=0.1cm of X3] (O1) {$u_3$};

        \node[left=2.0cm of X1] (O1) {Time step $1$};

        \node[vertex,below=1.5cm of S1, fill] (T1) {}; 
        \node[vertex,below left=0.3cm and 0.3cm  of T1,fill] (T2){};
        \node[vertex,below right=0.3cm and 0.3 cm of T1, fill] (T3){};

        \path[draw,thick]
        (T1) edge (T2)
        (T2) edge (T3)
        ;

        \node[above=0.1cm of T1] (O1) {$u_1$};
        \node[left=0.1cm of T2] (O1) {$u_2$};
        \node[right=0.1cm of T3] (O1) {$u_3$};

        \node[vertex,below=1.5cm of X1,fill] (Y1){}; 
        \node[vertex,below left=0.3cm and 0.3cm  of Y1,fill] (Y2){};
        \node[vertex,below right=0.3cm and 0.3 cm of Y1, fill] (Y3){};
        
        \path[draw,thick]
        (Y1) edge (Y2)
        (Y2) edge (Y3)
        ;

        \node[above=0.1cm of Y1] (O1) {$u_1$};
        \node[left=0.1cm of Y2] (O1) {$u_2$};
        \node[right=0.1cm of Y3] (O1) {$u_3$};

        \node[left=2.0cm of Y1] (O1) {Time step $2$};

        \node[vertex,below=1.5cm of T1, fill] (U1) {}; 
        \node[vertex,below left=0.3cm and 0.3cm  of U1,fill] (U2){};
        \node[vertex,below right=0.3cm and 0.3 cm of U1, fill] (U3){};

        \path[draw,thick]
        (U1) edge (U2)
        (U1) edge (U3)
        (U2) edge (U3)
        ;

        \node[above=0.1cm of U1] (O1) {$u_1$};
        \node[left=0.1cm of U2] (O1) {$u_2$};
        \node[right=0.1cm of U3] (O1) {$u_3$};

        \node[vertex,below=1.5cm of Y1,fill] (Z1){}; 
        \node[vertex,below left=0.3cm and 0.3cm  of Z1,fill] (Z2){};
        \node[vertex,below right=0.3cm and 0.3 cm of Z1, fill] (Z3){};
        
        \path[draw,thick]
        (Z1) edge (Z3)
        (Z2) edge (Z3)
        ;

        \node[above=0.1cm of Z1] (O1) {$u_1$};
        \node[left=0.1cm of Z2] (O1) {$u_2$};
        \node[right=0.1cm of Z3] (O1) {$u_3$};

        \node[left=2.0cm of Z1] (O1) {Time step $3$};


    \end{tikzpicture}
    \caption{Temporal graphs $\mathcal{G}_1$ and $\mathcal{G}_2$ have different
   groups $\text{Aut}(\mathcal{G}_1)$ and $\text{Aut}(\mathcal{G}_2)$.
   $\text{Aut}(\mathcal{G}_1)$ has only the identity automorphism, i.e.,
   $(u_1,u_2,u_3) \rightarrow (u_1,u_2,u_3)$ whereas $\text{Aut}(\mathcal{G}_2)$
   has the identity automorphism and additionally the automorphism
   $(u_1,u_2,u_3) \rightarrow (u_3,u_2,u_1)$. This highlights the fact that
   temporal graphs with the same underlying graphs may have different
   automorphisms since the set of automorphisms of the temporal graph is the
   \textit{intersection} of the set of automorphisms of the constituent graphs
   at each time step.} 
    \label{fig:automporphism-temporal-graph}
\end{figure}
}
The \textit{orbit} of a vertex $u$ in a temporal graph $\mathcal{G}$ with
respect to $\auttg$ is the set $V' \subseteq V$ of all vertices that $u$ can be
mapped to by automorphisms in $\auttg$. Note that, if $V'$ is the orbit of $u$,
then the orbit of every vertex in $V'$ is also~$V'$: For any two vertices $v,v'$
in $V'$, the automorphisms $\sigma$ and $\sigma'$ that map $u$ to $v$ and $v'$,
respectively, can be composed to the automorphism $\sigma' \circ \sigma^{-1}$
that maps $v$ to $v'$. Furthermore, there cannot be an automorphism $\rho$ that
maps $v$ to a vertex outside $V'$ because $\rho\circ \sigma$ would then be an
automorphism that maps $u$ to a vertex outside~$V'$; {a contradiction to
the definition of the orbit $V'$ of $u$.}
We denote by $\torb$ the set of all orbits
of the vertices of $\mathcal{G}$. Note that this set forms a partition of~$V$.
We call |$\torb$| the {\em{orbit number}} of $\mathcal{G}$. We call an edge
$\{u, v\} \in E(G_t)$ for $t \in [\ell]$ an \textit{orbit boundary edge} if $u$
and $v$ are in different orbits, and \textit{inner orbit edge} otherwise. Two
orbits connected by an orbit boundary edge {in time step~$t$} are called
{\emph{adjacent (in time step $t$)}.}

We use automorphisms to transform temporal walks, as outlined in the following.
For a temporal graph $\mathcal{G}$ with lifetime $\ell$ and any automorphism
$\sigma \in \auttg$ and any temporal walk $W=(u_t, u_{t+1}, \ldots, u_{t'})$
that starts at time $t$ and ends at time $t'$ with $t, t' \in [\ell]$, we say we
\textit{apply $\sigma$ to $W$} when we construct the temporal walk
$W'=(\sigma(u_t), \sigma(u_{t+1}), \dots, \sigma(u_{t'}))$.

~\\ \noindent \textbf{Temporal Exploration Problem.}
The \textsc{temporal exploration problem} ($\texp$) is defined for a given
temporal graph $\mathcal{G}$ with vertex set $V$ and lifetime $\ell$ and asks
for the existence of a temporal walk $W$ such that $W$ starts at a given vertex
$u \in V$ and visits all vertices of $V$. In connected temporal graphs with
sufficiently large lifetime such a walk always exists, and this is the setting
we consider throughout this work. As such, the question of existence is no
longer of interest, and instead we focus on the time span of temporal walks that
start at $u$ at time step $1$ and visit all (or a certain set of) vertices.

The following is an adaptation of a result of Erlebach et
al.~\cite{ErlebachHK21}, slightly modified to fit our notation and use case.
Intuitively speaking, the lemma states that with every extra time step at least
one additional vertex becomes reachable. Thus, starting from any vertex at any
time step, we can reach any particular other vertex in $n-1$ steps.
Consequently, there is always a temporal walk that visits all vertices of
$\mathcal{G}$ within $n^2$ time steps. 

\begin{lemma}[Reachability,~\cite{ErlebachHK21}]\label{lem:reachability} Let
    $\mathcal{G}$ be a connected temporal graph with vertex set $V$ and lifetime
    $\ell$. Denote by $R_{t, t'}(u)$ the set of vertices reachable by some
    temporal walk starting at vertex $u \in V$ at time step $t \in [\ell]$ and
    ending at time step $t' \in [\ell]$ with $t'\ge t$. If $R_{t,t'}(u)\neq V$
    and $t'<\ell$, then $R_{t,t'}(u) \subsetneq R_{t, t'+1}(u)$.
\end{lemma}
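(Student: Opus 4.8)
The plan is to split the strict inclusion $R_{t,t'}(u) \subsetneq R_{t,t'+1}(u)$ into the (non-strict) inclusion $R_{t,t'}(u) \subseteq R_{t,t'+1}(u)$, which is immediate, and the exhibition of a single witness vertex reachable by time $t'+1$ but not by time $t'$. The only structural input I will use is that the snapshot governing the transition from time $t'$ to time $t'+1$ is connected, which holds because $\mathcal{G}$ is a connected temporal graph, so \emph{every} $G_t$ with $t \in [\ell]$ is connected.

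First I would record monotonicity: any temporal walk that starts at $u$ at time $t$ and is located at some vertex $v$ at time $t'$ can be extended by waiting at $v$ for one further time step, which witnesses $v \in R_{t,t'+1}(u)$; this gives $R_{t,t'}(u) \subseteq R_{t,t'+1}(u)$ and is exactly where the hypothesis $t' < \ell$ enters, ensuring $t'+1 \in [\ell]$ is a legitimate ending time. Next, set $A = R_{t,t'}(u)$ and observe that $A$ is nonempty, since $u \in A$ (one may wait at $u$ from time $t$ to time $t'$), and proper, since $A \neq V$ by assumption. I then invoke connectivity of the snapshot $G_{t'}$ active during the step to time $t'+1$ (it exists and is connected because $t' \in [\ell]$): a standard fact says that in a connected graph every nonempty proper vertex subset has an edge leaving it, so there is an edge $\{v,w\} \in E(G_{t'})$ with $v \in A$ and $w \notin A$. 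Concatenating a temporal walk that reaches $v$ at time $t'$ with the traversal of $\{v,w\}$ during time step $t'$ produces a temporal walk placing the agent at $w$ at time $t'+1$, hence $w \in R_{t,t'+1}(u) \setminus R_{t,t'}(u)$, which is the desired strict inclusion.

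This argument is largely routine, and the main thing to get right is the timing convention: I must make sure the escaping edge lives in precisely the snapshot that is active during the transition from $t'$ to $t'+1$ (namely $G_{t'}$ under the paper's convention that a move out of the time-$t'$ position uses $G_{t'}$), so that connectivity of \emph{that} snapshot is what gets invoked. The one genuine ingredient is the elementary cut fact, that a connected graph has a boundary edge across any nontrivial bipartition of its vertices; this follows immediately from the definition, since otherwise $A$ and $V \setminus A$ would lie in distinct connected components, contradicting connectivity of $G_{t'}$. No delicate estimates or case analysis are expected.
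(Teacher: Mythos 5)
Your proof is correct and is essentially the standard argument behind this lemma, which the paper simply imports from Erlebach et al.\ without reproving: monotonicity by waiting, plus a crossing edge of the connected snapshot $G_{t'}$ out of the nonempty proper subset $R_{t,t'}(u)$. Your care about which snapshot governs the transition from $t'$ to $t'+1$ is consistent with the paper's convention that an agent at $u_1$ in time step $t$ arrives at $u_2$ at the beginning of time step $t+1$ via an edge of $G_t$.
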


\noindent \textbf{Temporal Rendezvous Problem.} We consider a problem we call
\textsc{temporal rendezvous problem} ($\trp$) defined as follows. Let
$\mathcal{G}$ be a temporal graph with vertex set $V$ and lifetime $\ell$. Two
agents $a_1, a_2$ are placed at arbitrary vertices $u_1, u_2 \in V$
(respectively) in $G_1$ of $\mathcal{G}$. They compute respective temporal walks
$W_{1}$ and $W_{2}$ such that the agents meet at the same vertex in the same
time step at least once during these walks. The agents have the full information
of $\mathcal{G}$ available, but they cannot communicate with each other.
Furthermore, they do not know the location of the other agent, and the vertex
labels that one agent sees can be arbitrarily different from the vertex labels
that the other agent sees. The lack of consistent labels prohibits trivial
solutions such as the two agents agreeing to meet at a vertex with a specific
label, e.g., the lowest labeled vertex. We call such agents
\textit{label-oblivious}. A solution to $\trp$ is a pair of possibly different
programs $p_1$ and $p_2$ for agents $a_1$ and $a_2$, respectively, that the
agents use to compute and execute their temporal walks. The respective start
positions $u_1, u_2 \in V$ of the agents {(and the vertex labels that each
agent sees)} are chosen by an all-knowing adversary once a solution $(p_1, p_2)$
is provided. {As we assume that $\ell\ge n^2$, there is always a solution
that ensures that the agents meet: Agent $a_1$ simply waits at its start vertex,
while agent $a_2$ explores the whole graph and visits every vertex, which is
always possible within $n^2$ time steps as mentioned above. Therefore, we are
interested in solutions that enable the agents to meet as early as possible and
aim to obtain worst-case bounds significantly better than~$n^2$ on the number of
steps that are required for $\trp$.}

\section{Automorphism Utilities} 
\label{sec:aut-util}
We now introduce some helpful utilities regarding automorphisms that we use in
the following sections. Intuitively, we build up to a framework that allows us
to transform a temporal walk $W$ into a temporal walk
$W'$ that {visits more vertices that are desirable (with respect to the
exploration goal) than~$W$ does}, by applying a well-chosen automorphism to~$W$.
The following is needed to give specific guarantees that the transformed
temporal walks must uphold. {Throughout this section we use
and extend
techniques of the field of algebraic graph theory~\cite{GodsilR01,KnauerK19,LauriS16}, adapted to our specific
use cases.}

For this, we begin with some definitions. Let $\mathcal{G}$ be a temporal graph
with vertex set $V$ and let $S \in \torb$ be any orbit. For any $X \subseteq S$
and any $u \in S$ denote with $\auttg[u, X]$ the set of all automorphisms
$\sigma \in \text{Aut}(\mathcal{G})$ that map $u$ to any vertex $v \in X$. We
use the shorthand $\auttg[u,x]$ for $\auttg[u,\{x\}]$. We can then show the
following. 

\begin{lemma}\label{lem:numautos} 
    Let $S \in \torb$ be any orbit. Then
    $|\auttg[u,{x_1}]|=|\auttg[u,{x_2}]|$ for any $u, x_1, x_2 \in
    S$.
\end{lemma}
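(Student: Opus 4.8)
The plan is to exhibit an explicit bijection between $\auttg[u,x_1]$ and $\auttg[u,x_2]$, from which equality of cardinalities is immediate. The starting observation is that since $x_1$ and $x_2$ both lie in the orbit $S$ of $u$, they lie in a common orbit as each other, so by the remark following the definition of the orbit (namely that for any two vertices $v,v'$ of an orbit there is an automorphism $\sigma'\circ\sigma^{-1}$ mapping $v$ to $v'$) there exists an automorphism $\phi\in\auttg$ with $\phi(x_1)=x_2$. This is the only place where the hypothesis $x_1,x_2\in S$ is used, and it is exactly what makes the two sets comparable.

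Next I would define the map $\Phi\colon \auttg[u,x_1]\to\auttg[u,x_2]$ by $\Phi(\sigma)=\phi\circ\sigma$ and check that it is well defined: for $\sigma\in\auttg[u,x_1]$ we have $(\phi\circ\sigma)(u)=\phi(x_1)=x_2$, and $\phi\circ\sigma\in\auttg$ because $\auttg$ is closed under $\circ$, so indeed $\Phi(\sigma)\in\auttg[u,x_2]$. The map is a bijection because $\phi$ is invertible in the group $\auttg$: the symmetric assignment $\psi\mapsto\phi^{-1}\circ\psi$ sends $\auttg[u,x_2]$ into $\auttg[u,x_1]$ (using $\phi^{-1}(x_2)=x_1$), and the two maps are mutually inverse since composition is associative and $\phi^{-1}\circ\phi$ is the identity. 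Hence $|\auttg[u,x_1]|=|\auttg[u,x_2]|$.

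There is essentially no hard obstacle here; the only points requiring care are confirming the existence of the connecting automorphism $\phi$, which relies precisely on $x_1,x_2$ lying in a common orbit, and confirming closure and invertibility, which follow from $\auttg$ being a group under $\circ$. Equivalently, one could phrase the argument group-theoretically by noting that each nonempty set $\auttg[u,x]$ is a left coset $\tau\circ\auttg[u,u]$ of the stabilizer $\auttg[u,u]$ (where $\tau$ is any automorphism with $\tau(u)=x$) and then invoking the standard fact that all cosets of a subgroup share the cardinality of that subgroup. I would prefer the explicit-bijection formulation, since it keeps the argument self-contained and avoids introducing coset machinery that is not otherwise needed in the paper.
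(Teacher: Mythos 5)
Your proof is correct and uses essentially the same key idea as the paper: composing with a fixed automorphism $\phi\in\auttg$ that maps $x_1$ to $x_2$ and exploiting that this composition is injective. The only difference is presentational — you exhibit the bijection directly (with its explicit inverse $\psi\mapsto\phi^{-1}\circ\psi$), whereas the paper phrases the same construction as a proof by contradiction via an injection $\sigma\mapsto\alpha\circ\sigma$; both are equally valid.
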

\newcommand{\proofnumautos}{
\begin{proof}
    We show the claim via proof by contradiction. Without loss of generality
    assume that $|\auttg[u,{x_1}]| >|\auttg[u,{x_2}]|$. As $x_1, x_2$ are
    contained in the same orbit, we know an automorphism $\alpha$ exists such
    that $\alpha(x_1)=x_2$, as mentioned in the preliminaries. Denote with
    $\auttg[u,{x_1}]_{\alpha}=\{\alpha \circ \sigma| \sigma \in
    \auttg[u,{x_1}]\}$. It follows by definition that $\auttg[u,{x_1}]_{\alpha}
    \subseteq \auttg[u,{x_2}]$, as all automorphisms of
    $\auttg[u,{x_1}]_{\alpha}$ map $u$ to $x_2$. Due to all automorphisms being
    bijective functions, it holds that
    $|\auttg[u,{x_1}]_{\alpha}|=|\auttg[u,{x_1}]|$. Then we have
    $|\auttg[u,{x_1}]| \leq |\auttg[u,{x_2}]|$, contradicting our assumption.
\end{proof}
}
\full{\proofnumautos}

    Let $S \in \torb$. We now consider a special 2-dimensional
    \textit{automorphism matrix $\mathcal{M}_{u, Y, X}$} for {any $u\in S$, $Y\subseteq S$, and $X \subseteq S$.}
    {It has} columns $C_1, C_2,
    \ldots, C_{|Y|+1}$, and a row for each $\sigma \in \auttg[u,X]$.
    We refer to the row for some $\sigma$ simply as row~$\sigma$.
    {The entry in row $\sigma$ of column $C_1$ is~$\sigma(u)$.}
    {Each vertex $y\in Y$ is assigned a unique column among $C_2,
    \ldots C_{|Y|+1}$ in an arbitrary way. The entry in row $\sigma$ of the
    column to which $y$ is assigned is~$\sigma(y)$.} We now give some intuition
    about the application of an automorphism matrix. Assume that we are
    constructing a temporal walk $W$ that has already visited a set $T
    \subsetneq S$ and we want to extend it to visit the vertices of $S \setminus
    T$ (with $S$ an arbitrary orbit). To facilitate this extension we first
    construct a temporal walk $W'$ that visits at least a certain number of
    vertices of $S$, but that is not guaranteed to visit vertices of $S
    \setminus T$. This walk $W'$ cannot be used to extend $W$ in a meaningful
    way. Instead, using the automorphism matrix we show that there always exists
    an automorphism $\sigma \in \text{Aut}(\mathcal{G})$ that we can apply to
    $W'$ to obtain a temporal walk $W_{\sigma}$ that visits a guaranteed
    fraction of vertices of $S \setminus T$. We can then use $W_{\sigma}$ as our
    desired extension. {Figure~\ref{fig:paths} visualizes this concept.} In
    later sections we will show how repeated application of such extensions
    leads to temporal walks that visit all vertices of an orbit~$S$. Note that
    the value $p$ in the following lemma represents the fraction of vertices
    that we still need to visit compared to all vertices in the orbit under
    consideration.
	{While $T$ is a set of already visited vertices in the application
	sketched above, the following lemma and Corollary~\ref{cor:transform2}
	are formulated more generally for arbitrary sets~$T\subseteq S$.}

\begin{lemma}\label{lem:transform1} Let $\mathcal{G}$ be a connected temporal
    graph with lifetime $\ell$ and let $S \in \torb$ be any orbit. Let $T
    \subseteq S$ with $p=(|S|-|T|)/|S|$ and $W=(u_1, u_2, \ldots, u_x)$ a
    temporal walk starting at time step $t$ and ending at $t'$ with $t, t' \in
    [\ell]$ and $u_1 \in S$ such that $W$ visits $k$ vertices of $S$. Then there
    exists a temporal walk $W'$ starting at a vertex $u' \in S$ in time step $t$
    and ending at time step $t'$ that visits at least $p k$ vertices of $S
    \setminus T$.
\end{lemma}
\newcommand{\prooftransformone}{
\begin{proof}
    Denote with $Y$ the set of $k$ vertices of $S$ visited by $W$. Construct the
    automorphism matrix $\mathcal{M}_{u_1, Y, S}$. By definition, we know that
    each column contains all vertices of~$S$ (as the matrix contains a row for
    \emph{every} $\sigma\in\auttg$). By Lemma~\ref{lem:numautos}, each
    vertex is contained the same number of times in each column. Thus, each
    column contains a fraction $p$ of vertices of $S \setminus T$. By simple
    counting arguments there exists a row $\sigma$ of $\mathcal{M}_{u_1, Y, S}$
    that contains at least $p k$ vertices of $S \setminus T$. Then applying
    $\sigma$ to $W$ yields the temporal walk $W'$ with the required
    characteristics.
\end{proof}
}
\full{\prooftransformone}
When restricting the possible start vertices where the temporal walk $W'$ of
Lemma~\ref{lem:transform1} is allowed to start, we obtain the following
corollary. In detail, we are given a set $X \subset S$ such that the walk $W'$
is only allowed to begin at a vertex $u' \in X$. In Lemma~\ref{lem:transform1},
$W'$~was allowed to start at any vertex of~$S$. Our use case for the following
corollary is that $X$ is a set of vertices that can be reached faster, making
them better candidates for start vertices when extending walks in the way
sketched above Lemma~\ref{lem:transform1}.

\begin{corollary}\label{cor:transform2} Let $\mathcal{G}$ be a connected
    temporal graph with lifetime $\ell$ and let $S \in \torb$ be any orbit. Let
    $T \subsetneq S$ and $W$ a temporal walk starting at time step $t$ and
    ending at $t'$ with $t, t' \in [\ell]$ such that the first vertex of $W$ is
    in $S$ and such that $W$ visits $k$ different vertices of $S$. For any $X
    \subseteq S$ with $|X| > |T|$ there exists a temporal walk $W'$ starting at
    a vertex $u' \in X$ at time step $t$ and ending at time step $t'$ that
    visits at least $(c-1)/c \cdot k$ vertices of $S \setminus T$, with
    $c=|X|/|T|$.
\end{corollary}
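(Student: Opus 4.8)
The plan is to refine the averaging argument of Lemma~\ref{lem:transform1}, but to carry it out over the restricted set of automorphisms that send the start vertex into~$X$. First I would let $Y$ be the set of $k$ vertices of $S$ that $W$ visits, let $u_1\in S$ be the first vertex of $W$, and form the automorphism matrix $\mathcal{M}_{u_1,Y,X}$, whose rows are indexed by the automorphisms $\sigma\in\auttg[u_1,X]$ and whose columns $C_2,\dots,C_{|Y|+1}$ record the images $\sigma(y)$ for $y\in Y$. Applying any such $\sigma$ to $W$ produces a walk $W'=\sigma(W)$ that spans the same time steps $t$ to $t'$, starts at $\sigma(u_1)\in X$ (the entry in column $C_1$), and whose visited vertices of $S$ are exactly $\sigma(Y)$; hence the number of vertices of $S\setminus T$ that $W'$ visits equals the number of entries of row~$\sigma$ (among $C_2,\dots,C_{|Y|+1}$) that land in $S\setminus T$. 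So it suffices to exhibit one row with at least $\frac{c-1}{c}k$ such ``good'' entries.

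Next I would count. By Lemma~\ref{lem:numautos} the quantity $m:=|\auttg[u,x]|$ is the same for every $u,x\in S$ (summing over all targets it equals $|\auttg|/|S|$, which makes it independent of the source as well), so the matrix has exactly $|\auttg[u_1,X]|=|X|\cdot m$ rows, each contributing $k$ distinct entries in the columns $C_2,\dots,C_{|Y|+1}$, for a total of $|X|mk$ entries, all lying in~$S$. To bound the ``bad'' entries (those in $T$) I would fix a column $C_y$ and a target $w\in T$ and observe that the number of rows $\sigma$ with $\sigma(y)=w$ is at most $|\{\sigma\in\auttg:\sigma(y)=w\}|=m$, since restricting attention to $\auttg[u_1,X]$ can only discard automorphisms. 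Summing over the $|T|$ targets bounds the bad entries of $C_y$ by $|T|m$, and summing over the $k$ columns bounds all bad entries by $k|T|m$. Therefore the number of good entries is at least $|X|mk-k|T|m=km(|X|-|T|)$, and averaging over the $|X|m$ rows yields an average of $k(|X|-|T|)/|X|=\frac{c-1}{c}k$ good entries per row; some row~$\sigma$ meets this bound, and $W'=\sigma(W)$ is the desired walk.

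The main obstacle, and the point where the argument genuinely departs from Lemma~\ref{lem:transform1}, is the counting in step two: once we restrict to rows with $\sigma(u_1)\in X$, the columns for $y\in Y$ need no longer be balanced, so the clean ``a fraction $p$ of every column is good'' reasoning is unavailable. The fix is to use the uniformity of the full group (Lemma~\ref{lem:numautos}) only as the per-target upper bound $m$ on the occurrences of any single vertex within a restricted column, and then to recover the factor $\frac{c-1}{c}$ through the global average rather than through a per-column count. The remaining care is in the easy facts one must not skip: that every $\sigma\in\auttg$ maps $S$ into $S$ (so no entry escapes $S$ and every entry is classified as either in $T$ or in $S\setminus T$), and that injectivity of $\sigma$ forces the $k$ entries of a row to be distinct vertices (so the good entries genuinely count distinct visited vertices of $S\setminus T$). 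The hypothesis $|X|>|T|$, i.e.\ $c>1$, is exactly what makes the guaranteed bound $\frac{c-1}{c}k$ positive.
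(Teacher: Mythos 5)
Your proof is correct and follows essentially the same route as the paper: both construct the restricted matrix $\mathcal{M}_{u_1,Y,X}$, use the uniformity from Lemma~\ref{lem:numautos} to bound each vertex's occurrences in a column by $m$ (so at most $m|T|$ bad entries per column out of $m|X|$), and then average over the $m|X|$ rows to find one with at least $\frac{c-1}{c}k$ entries in $S\setminus T$. The only cosmetic difference is that the paper states the $(c-1)/c$ lower bound per column before averaging, while you sum the bad entries globally first --- the same computation in a different order.
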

\newcommand{\prooftransformtwo}{
\begin{proof}
    We proceed along the lines of the proof of Lemma~\ref{lem:transform1} with
    $Y$ the set of vertices of $S$ visited by $W$, but instead of the matrix
    $\mathcal{M}_{u, Y, S}$ we construct the matrix $\mathcal{M}_{u, Y, X}$. By
    construction, the first column of $\mathcal{M}_{u, Y, X}$ contains all
    vertices of $X$ {and each vertex occurs the same number $m$ of times in
    the first column. In all other columns, we also have $m|X|$ vertices (not
    all may be different, but each vertex occurs at most $m$ times). In the
    worst case, $m |T|$ vertices are from $T$---to see this, note that every
    vertex of $V$ occurs $m$ times in every column in $\mathcal{M}_{u, Y, S}$
    and in $\mathcal{M}_{u, Y, X}$ no new occurrences of a vertex are
    introduced.} Thus, each column contains at least $m|X|-m|T|$ vertices of
    $S\setminus T$, and so the fraction of vertices from $S\setminus T$ in each
    column is at least $(|X|-|T|)/|X|=(c-1)/c$. Now, by the same counting
    arguments as those used in the proof of Lemma~\ref{lem:transform1}, we know
    that there exists a row $\sigma$ that contains at least $(c-1)/c \cdot k$
    vertices of $S \setminus T$. Therefore, we can construct the desired
    temporal walk $W'$ by applying $\sigma$ to $W$.
\end{proof}
}
\full{\prooftransformtwo}

\section{Upper Bounds for TEXP}
\label{sec:texp}
A common approach to build a temporal walk for $\texp$ is to use
Lemma~\ref{lem:reachability}, i.e., to construct a (large) set $X$ of reachable
vertices so that an unseen vertex $v$ of the current walk is in the set $X$ and
the walk can then be extended by~$v$. We are interested in exploring the
vertices of one orbit quickly, as this will be useful for $\trp$ in
Section~\ref{sec:rendezvous} where the agents try to meet in one orbit, and for
$\texp$ because we can explore a temporal graph orbit by orbit. Therefore, we
want to find walks visiting many vertices of one orbit. Our approach is similar
to the common approach mentioned above, and so we want to construct a (large)
set $X$ of reachable vertices, but now with the property that $X$ is a subset of
the orbit under consideration. To construct $X$, we show in
Lemma~\ref{lem:laneorbit} a kind of ``reachability between orbits.'' 

To describe this in more detail, we need the concept of so-called lanes.
Intuitively, lanes are defined for a set of vertices that are all contained in
some single orbit, and give us knowledge about the vertices that are quickly
reachable while only using orbit boundary edges in each time step. Using this
concept of lanes we derive a first result for exploring a single large orbit
with a temporal walk that spans $O((n^{5/3}+rn)\log n)$ time steps
(Theorem~\ref{thm:orbitexplore}). In the proof of that lemma we build the final
temporal walk iteratively, by concatenating multiple smaller temporal walks. To
make sure each new such small temporal walk visits a desired number of vertices
not yet visited, we use Lemma~\ref{lem:transform1}, which---informally---lets us
transform temporal walks that visit too many previously visited vertices into
temporal walks that visit many previously unvisited vertices.

We follow this up with a more refined technique that considers the size of the
orbit $S$ one wants to explore as a
parameter, but also uses the concept of lanes and walk transformations sketched
above. It gives us an upper bound of $O(|S|^{1+\epsilon}+n\log |S|)$, for any
constant $\epsilon > 0$. This result is formulated in
Theorem~\ref{thm:epsilonexploration}. Finally, we use a repeated application of
Theorem~\ref{thm:epsilonexploration} to achieve an upper bound for
$\textsc{TEXP}$ of $O(rn^{1+\epsilon})$. We start with an auxiliary lemma that
focuses on the orbit boundary edges between two orbits.

\begin{lemma}\label{lem:orbitedges} {Let $G_t$ be the graph at time
    step~$t$ in a connected temporal graph~$\mathcal{G}$ and $S,S' \in \torb$,
    and let $G'$ be the subgraph of $G_t$ that contains only orbit boundary
    edges. Then all vertices in $S$ have the same degree in the bipartite graph
    $G'[S \cup S']$.}
\end{lemma}
\newcommand{\prooforbitedges}{
\begin{proof} 
    First, the lemma is trivially true if the orbits $S$ and $S'$ are not
    adjacent in time step~$t$. Thus, for the rest of the proof we consider the
    situation that $S$ and $S'$ are adjacent. Assume for a contradiction that
    not all vertices of $S$ have the same degree in $G'$, and let $u, v$ be two
    arbitrary vertices of $S$ with respective degrees $d_u$ and $d_v$ in $G'$
    such that $d_u < d_v$. By definition of an orbit there exists a $\sigma \in
    \auttg$ with $\sigma(v)=u$. Since $\sigma$ must map neighbors of $v$ to
    neighbors of $u$ and since $v$ has more neighbors than $u$ in $S'$, $\sigma$
    must map at least one vertex $w$ of $v$ to a neighbor $w'$ of $u$ outside of
    $S'$. However $\sigma(w)=w'$ implies that $w'\in S'$, a contradiction.
\end{proof}
}
\full{\prooforbitedges}

\begin{figure}[!t]
	\centering
	\begin{subfigure}[t]{0.45\textwidth}
		\centering
		\includegraphics{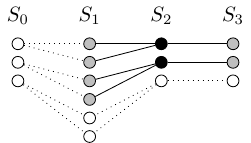}
		\caption{}
		\label{fig:lanes1}
	\end{subfigure}
	\begin{subfigure}[t]{0.45\textwidth}
		\centering
		\includegraphics{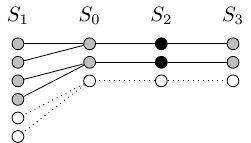}
		\caption{}
		\label{fig:lanes2}
	\end{subfigure}
	\begin{subfigure}[t]{0.45\textwidth}
		\centering
		\includegraphics{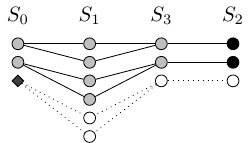}
		\caption{}
		\label{fig:lanes3}
	\end{subfigure}
	\begin{subfigure}[t]{0.45\textwidth}
		\centering
		\includegraphics{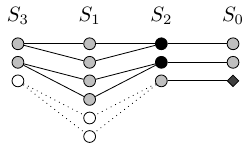}
		\caption{}
		\label{fig:lanes4}
	\end{subfigure}
    \caption{Visualization of the properties of a lane $L_{t, t+r}(X)$.
    Each vertex in the set $X$ is colored black, and the set of reachable vertices in each
    time step is colored gray. Each figure represents one additional time step.
    In the second time step (Figure~b) all vertices of the lane are reachable.
    In the third time step (Figure~c) one vertex outside the lane must be
    reachable, for example the diamond shaped vertex. This is due to the fact
    that the temporal graph at hand is connected, and thus at least one additional
    vertex is reachable with every next time step. Here, one additional
    time step then suffices to reach a vertex of $S_2 \setminus X$ (Figure~d).}
    \label{fig:lanes}
\end{figure}

To describe  
reachability between orbits, we have to introduce some extra notation. Let
$\mathcal{G}$ be a temporal graph with lifetime $\ell$ and vertex set $V$ and $S
\in \torb$ be an orbit. We call a \textit{lane} $L_{t, t'}(X)$ with $X \subseteq
S$ and $t, t' \in [\ell]$ the set of all vertices reachable from any $u \in X$
in $G$ by any temporal walk $W$ that only uses orbit boundary edges and starts
in time step $t$ and ends in time step at most $t'$. We write $L_{t, t'}(u)$
instead of $L_{t, t'}(\{u\})$. {See Fig.~\ref{fig:lanes} for some intuition.}

The next lemma gives us a lower bound on the number of vertices of an orbit $S'$
that can be reached from a subset $X$ of the vertices of an orbit $S$
{within $r$ time steps}. Intuitively speaking, we show a lower bound on the
number of vertices {\em reachable from orbit $S$ {in} another orbit $S'$}.
{A simple consequence of the following lemma is that, from any start vertex
in the temporal graph, at least one vertex in every orbit is reachable within
$r$ time steps.}

\begin{lemma}[Reachability between Orbits]\label{lem:laneorbit} Let
    $\mathcal{G}$ be a connected temporal graph with lifetime $\ell$ and $S \in
    \torb$. For any $X \subseteq S$ and $S' \in \torb$ it holds that $|L_{t,
    t'}(X) \cap S'| \geq  \lceil |X| \cdot |S'|/|S| \rceil$ for any $t \in
    [\ell]$ and $t'=t+r$, where $r=|\torb|$ is the orbit number.
\end{lemma}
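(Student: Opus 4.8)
The plan is to reduce the bound to two independent ingredients: a \emph{reachability fact}, stating that from a single vertex of $S$ every orbit is already met within $r$ steps when only orbit boundary edges are used, and a \emph{symmetry-plus-double-counting} argument that upgrades this into the claimed proportional bound for an arbitrary set $X$. (The case $S=S'$ is immediate, since waiting walks give $X\subseteq L_{t,t'}(X)$ and the bound reads $|L_{t,t'}(X)\cap S|\ge |X|$, so I assume $S\neq S'$ below.)

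First I would prove the reachability fact: for any $u\in S$ and any orbit $S'$, the lane $L_{t,t+r}(u)$ contains at least one vertex of $S'$. Tracking the set $\mathcal{U}_i\subseteq\torb$ of orbits met by $L_{t,t+i}(u)$, I would show $\mathcal{U}_{i+1}\supsetneq\mathcal{U}_i$ whenever $\mathcal{U}_i\neq\torb$. Indeed, the union of the vertices lying in the orbits of $\mathcal{U}_i$ is then a proper subset of $V$, so by connectivity of $G_{t+i}$ there is an edge leaving it; this edge is an orbit boundary edge joining some $O\in\mathcal{U}_i$ to some $O'\notin\mathcal{U}_i$. Its endpoint in $O$ need not itself be reachable, but here Lemma~\ref{lem:orbitedges} is decisive: since $O$ and $O'$ are adjacent in time step $t+i$, \emph{every} vertex of $O$ has the same (hence positive) degree towards $O'$, so the vertex of $L_{t,t+i}(u)$ that already lies in $O$ has a neighbour in $O'$, which is then reached one step later. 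As $|\mathcal{U}_0|=1$ and each step adds at least one orbit, all $r$ orbits are met within $r-1\le r$ steps. I expect this step to be the main obstacle, since it is the only place where connectivity, the specific bound $t'=t+r$, and the equal-degree structure of orbit boundary edges all come into play; everything afterwards is counting.

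Next I would set up the double counting. Fixing $t$ and $t'=t+r$, let $R=\{(u,v)\in S\times S': v\in L_{t,t'}(u)\}$. Using that an automorphism sending $u$ to $u'$ maps orbit boundary edges to orbit boundary edges and preserves each orbit setwise, it maps orbit-boundary temporal walks bijectively to orbit-boundary temporal walks; arguing as in Lemma~\ref{lem:numautos}, this shows that $a:=|L_{t,t'}(u)\cap S'|$ is the same for all $u\in S$ and that $b:=|\{u\in S: v\in L_{t,t'}(u)\}|$ is the same for all $v\in S'$. Double counting $R$ then yields $|S|\,a=|S'|\,b$, and the reachability fact gives $a\ge 1$, hence also $b\ge 1$, so $a/b=|S'|/|S|$ is well defined.

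Finally, to lower-bound $|L_{t,t'}(X)\cap S'|$ I would count the pairs $(u,v)$ with $u\in X$, $v\in S'$, and $v\in L_{t,t'}(u)$: there are exactly $|X|\,a$ of them, while each reachable $v\in S'$ occurs in at most $b$ of them, as $v$ has at most $b$ preimages in $S\supseteq X$. Hence $|X|\,a\le b\cdot|L_{t,t'}(X)\cap S'|$, giving $|L_{t,t'}(X)\cap S'|\ge |X|\,a/b=|X|\,|S'|/|S|$. Since the left-hand side is an integer, it is at least $\lceil |X|\,|S'|/|S|\rceil$, which is exactly the claim; note that $a\ge 1$ is genuinely needed, as otherwise the left-hand side would be $0$ while the rounded-up target is at least $1$.
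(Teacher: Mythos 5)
Your proof is correct, but it reaches the bound by a genuinely different route than the paper. The paper runs a single induction over time steps in which orbits get ``processed'' one by one: starting from $S$ itself, it uses connectivity to find an unprocessed orbit adjacent to a processed one and then uses the bi-regularity from Lemma~\ref{lem:orbitedges} to push the quantitative bound $\lceil |X|\cdot|S^*|/|S|\rceil$ across that adjacency, via a chain of nested ceiling inequalities of the form $\bigl\lceil \lceil |X|\cdot|S_p|/|S|\rceil\cdot|S_u|/|S_p|\bigr\rceil\ge\lceil |X|\cdot|S_u|/|S|\rceil$. You instead decouple the two ingredients: a purely qualitative reachability claim (from a single $u\in S$, every orbit is met within $r$ lane steps --- your $\mathcal{U}_i$ argument, which uses connectivity and Lemma~\ref{lem:orbitedges} in essentially the same way the paper does, but only needs the common degree to be positive rather than its exact value), followed by a global symmetry and double-counting argument on the relation $R\subseteq S\times S'$ showing $|S|\,a=|S'|\,b$ and hence the proportional bound for arbitrary $X$. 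Your route buys a cleaner quantitative step: the proportionality $a/b=|S'|/|S|$ is exact, the bound $|L_{t,t'}(X)\cap S'|\ge |X|\cdot|S'|/|S|$ holds before any rounding, and you avoid verifying the nested-ceiling manipulations; the price is that you must justify that $\auttg$ acts on lanes (automorphisms preserve orbit boundary edges and time steps, so $\sigma(L_{t,t'}(u))=L_{t,t'}(\sigma(u))$), a transitivity argument in the spirit of Lemma~\ref{lem:numautos} that the paper's proof does not need at this point. The paper's version is more constructive in that it tracks which orbit acquires the bound at which time step, but both proofs are complete and yield the same statement.
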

\newcommand{\prooflaneorbit}{
\begin{proof}
    We say an orbit $S^*$ is \textit{processed for $r'$} if $|L_{t, t+r'}(X)
    \cap S^*| \geq \lceil |X| \cdot |S^*|/|S|\rceil$ for some $r' \leq r$, and
    \textit{unprocessed for $r'$} otherwise. 
    We now show that as long as unprocessed orbits remain, there exists an orbit
    $S'$ such that $S'$ is unprocessed for $r'$, but processed for $r'+1$, i.e.,
    in each time step {at least} one unprocessed orbit becomes processed.
    The lemma then follows simply by counting the number of orbits.

    For $r'=0$, orbit $S$ is trivially processed for~$r'$. For $r'=1$ note that
    there exists an orbit $S'$ that is adjacent to $S$ in time step~$t$.
    Lemma~\ref{lem:orbitedges} shows that in $G'=G'_{t}[S \cup S']$ all vertices
    of $S$ have the same degree $d_{S}$, and all vertices of $S'$ have the same
    degree $d_{S'}$, where $G'_{t}$ is the subgraph of $G_{t}$ containing only
    the orbit boundary edges that are present in time step~$t$. This implies
    $|E(G')|=|S|\cdot d_S = |S'|\cdot d_{S'}$ and hence $d_S/d_{S'}=|S'|/|S|$.
    Denote by $E_X$ the set of edges incident to vertices of $X$ in $G'$. It
    holds that $|E_X| = |X| \cdot d_{S}$. By definition, a vertex $u' \in S'$ is
    incident to at most $d_{S'}$ edges of $E_X$. Denote with $X' \subseteq S'$
    the set of vertices of $S'$ that are neighbors of vertices of $X$ in $G'$.
    It holds that $|X'| \geq \lceil |X| \cdot d_{S}/d_{S'} \rceil=\lceil |X|
    \cdot |S'|/|S| \rceil$. As $L_{t,t+1}(X)\cap S'\supseteq X'$, this means
    that $S'$ is processed for $r'=1$. {See Fig.~\ref{fig:lanes1} for a
    visualization of this.}

    For $r'$ with $1\le r'< r$, if there is still an unprocessed orbit for~$r'$,
    there must exist an unprocessed orbit $S_u$ adjacent to a processed orbit
    $S_p$ in time step $t+r'$. This follows from the fact that $\mathcal{G}$ is
    connected. As $S_p$ is processed for~$r'$, we know that the set
    $X_p=L_{t,t'}(X)\cap S_p$ satisfies $|X_p| \ge \lceil |X| \cdot |S_p|/|S|
    \rceil$. By the same argument as in the previous paragraph, it follows that
    the set $X_u\subseteq S_u$ of vertices in $S_u$ that are neighbors of
    vertices in $S_p$ in time step $t+r'$ satisfies $|X_u|\ge \lceil |X_p| \cdot
    |S_u|/|S_p|\rceil \geq \big\lceil \lceil |X| \cdot |S_p|/|S| \rceil \cdot
    |S_u|/|S_p| \big\rceil \geq \lceil |X| \cdot |S_u|/|S| \rceil$. As
    $L_{t,t+r'+1}(X)\cap S_u\supseteq X_u$, it follows that $S_u$ is processed
    for $r'+1$. As at least one orbit becomes processed in each time step, all
    orbits are processed latest for $r'=r$.
\end{proof}
}
\full{\prooflaneorbit}

By using Lemma~\ref{lem:reachability} and Lemma~\ref{lem:laneorbit}, we now
bound the number of time steps needed to reach a set of $h$ vertices within an
orbit $S$.

\begin{lemma}\label{lem:nextreachable} Let $\mathcal{G}$ be a connected temporal
graph with lifetime $\ell$ and vertex set $V$. Let $S \in \torb$ and let
$r=|\torb|$ be the orbit number. For any $h \leq |S|$, start vertex {$u \in S$}
{and start time $t$}, there exists a set $X \subset S$ {with $|X|=h$} such that
we can reach any vertex in $X$ in at most $O(\min\{h \cdot n / |S|, h r\}+r)$
time steps. That is, for every vertex $u'$ of $X$, we have a temporal walk
starting at $u$ at time step $t$ and ending at $u'$ at time step $t'$ with
$t'-t=O(\min\{h \cdot n / |S|, h r\}+r)$ {such that $<h$ vertices of $S$ are reachable
by time step $t'-1$.}
\end{lemma}

\newcommand{\proofnextreachable}{
\begin{proof}
    We first show how to reach $h$ vertices in $O(h \cdot n/|S| + r)$ time
    steps. Denote with $R_{t, t'}(u)$ the set of vertices reachable with a
    temporal walk from $u$ at time step $t$ and ending at or before time step
    $t'$. Our upper bound on the number of time steps required is based on first
    expanding the set of reachable vertices until we achieve an
    \textit{overflow} in some orbit $S' \in \torb$, which we define as $|S' \cap
    R_{t, t'}(u)| \geq \texttt{min}\{|S'|, \lceil h\cdot |S'|/|S|\rceil+1\}$. As
    $\mathcal{G}$ is connected, we know that, as long as $R_{t, t'}(u)\neq V$,
    $R_{t, t'}(u) \subsetneq R_{t, t'+1}(u)$ (Lemma~\ref{lem:reachability}).
    From this it follows that an overflow exists after at most $\sum_{S' \in
    \torb} (\lceil h \cdot |S'|/|S|\rceil + 1)\leq \lceil h
    n/|S|\rceil+{2}r$ time steps, due to the pigeonhole principle. {See
    Fig.~\ref{fig:lanes3} for an example of an overflow.} Let $t'$ be the time
    step when the first overflow occurred, and let $S'$ be an orbit where an
    overflow occurred in time step~$t'$. Denote by $X'$ the set of vertices
    reachable in $S'$ by time step~$t'$. Using Lemma~\ref{lem:laneorbit} with
    lane $L_{t', t'+r}(X')$ we know that once an overflow occurs in $S'$, after
    $r$ additional time steps an overflow occurs in all orbits, which guarantees
    that $h$ vertices in $S$ are reachable: {If $R_{t,t'}(u)\cap S'=S'$,
    Lemma~\ref{lem:laneorbit} implies $|L_{t',t'+r}\cap S|\ge \lceil |S'|\cdot
    |S|/|S'|\rceil=|S|$, meaning all vertices of $S$ are reachable by time
    $t'+r$. If $|R_{t,t'}(u)\cap S'|\ge \lceil h \cdot |S'|/|S|\rceil +1$,
    Lemma~\ref{lem:laneorbit} implies $|L_{t',t'+r}\cap S|\ge \lceil (\lceil h
    \cdot |S'|/|S|\rceil +1)\cdot |S|/|S'|\rceil\ge h$, meaning that at least
    $h$ vertices of $S$ are reachable by time $t'+r$.} In total, $O(h \cdot n /
    |S| + r)$ time steps always suffice to guarantee an overflow in orbit $S$,
    which means we can reach $h$ vertices of $S$.

    Next, we show how to reach $h$ vertices of $S$ in $O(hr)$ time steps.
        Let $X$ be the set of reachable vertices
        of $S$. Initially (in time step $t$) $X=\{u\}$. After $r-1$ time steps
        all vertices in lane $L_{t, t+r}(X)$ are reachable. By
        Lemma~\ref{lem:laneorbit}, $|L_{t, t+r}(X)\cap S'|\geq \lceil |X| \cdot
        |S'|/|S| \rceil$ for any $S' \in \torb$. 
        After one additional time step (Lemma~\ref{lem:reachability}) there
        exists some orbit $S'$ such that at least one more vertex $v \in
        S'\setminus L_{t, t+r}(X)$ is reachable. {See
        Fig.~\ref{fig:lanes3} for an example of this.} So the set $X' :=
        (L_{t, t+r}(X) \cap S')\cup \{v\}$ is reachable after $r$ time steps.
        Note that $|X'| \geq \lceil |X| \cdot |S'|/|S| \rceil + 1$. Using
        Lemma~\ref{lem:laneorbit} again, now for start time $t+r+1$ and
        initially reachable vertex set $X'$ in orbit~$S'$, we get that the set
        $X'':= L_{t+r+1, t+2r+1}(X') \cap S$ satisfies $|X''| \geq \lceil |X'|
        \cdot |S|/|S'| \rceil \geq \lceil( \lceil|X| \cdot |S'|/|S|\rceil+1)
        \cdot |S|/|S'|\rceil \geq \lceil |X| + |S|/|S'| \rceil \ge |X|+1$.
        Therefore, we can reach a vertex $w$ in $S \setminus X$ by time
        $t+2r+1$. Repeating the construction with $X=\{u, w\}$ and then with
        sets $X$ of size $3, \ldots, h-1$, after $O(rh)$ time steps there are
        $h$ vertices of $S$ reachable.
\end{proof}
}
\full{\proofnextreachable}

    Next we present Theorem~\ref{thm:orbitexplore}, which states an upper bound
    for visiting all vertices of a given orbit $S$. The rough idea used in the
    proof is that we iteratively build the final temporal walk $W$ by
    concatenating smaller temporal walks. In each step of the iteration, a small
    temporal walk $W'$ is first constructed via Lemma~\ref{lem:nextreachable} to
    visit a subset of the vertices of $S$, which are not necessarily unvisited,
    but such that the size of the subset is at least a certain threshold value.
    Using Lemma~\ref{lem:transform1} we find an automorphism $\sigma$ that we
    apply to $W'$ to obtain a temporal walk $W_{\sigma}$ that visits many
    unvisited vertices of $S$. We then extend $W$ via this transformed walk
    $W_{\sigma}$\conf{ (see Fig.~\ref{fig:paths} for a sketch of the proof
    idea)}. In this way we can explore all vertices of a large orbit $S$ faster
    than by repeated application of Lemma~\ref{lem:reachability}. The key to
    obtain a good bound on the number of time steps required is to find a good
    value for the number of vertices of $S$ visited by each small temporal walk.

\begin{theorem}\label{thm:orbitexplore} Let $\mathcal{G}$ be a temporal graph
    with lifetime $\ell$ and vertex set $V$. Take $S \in \torb$ and $r=|\torb|$
    the orbit number. For any $t\in[\ell]$ there exists a temporal walk $W$
    starting at time step $t$ that visits all vertices of $S$ and ends at time
    step $t'$ with $t'-t=O((n^{5/3}+rn)\log n)$. 
\end{theorem}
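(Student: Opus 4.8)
The plan is to build the required walk $W$ by concatenating many short \emph{progress walks}, each responsible for visiting a batch of previously unvisited vertices of $S$. Write $s=|S|$, and suppose at a generic stage the agent has already visited a set $T\subseteq S$, so that $p=(s-|T|)/s$ is the current fraction of unvisited vertices. The aim of the next progress walk is to visit $\Theta(pk)$ vertices of $S\setminus T$, where $k$ is a threshold to be fixed at the end. The two ingredients are Lemma~\ref{lem:nextreachable}, which lets us reach vertices of $S$ quickly, and Lemma~\ref{lem:transform1}, which lets us turn a cheap walk visiting $k$ \emph{arbitrary} vertices of $S$ into a walk visiting at least $pk$ \emph{unvisited} vertices by applying a suitable automorphism. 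Thus the construction reduces to (i)~building one cheap $k$-visiting walk, (ii)~transforming it into a $\Theta(pk)$-new-vertices walk, and (iii)~iterating until $S$ is exhausted.

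For step~(i) I would build the $k$-visiting walk hop by hop, starting from a chosen $u_1\in S$: having collected $i<k$ vertices and being at a vertex of $S$, I apply Lemma~\ref{lem:nextreachable} with parameter $i+1$ to reach a set of $i+1$ vertices of $S$, at least one of which is new to the current walk, in $O(\min\{(i+1)n/s,(i+1)r\}+r)$ steps; travelling to that new vertex extends the walk. Summing over $i$ yields a walk of span $B(k)=O\!\big(k^2\min\{n/s,r\}+kr\big)$ that visits $k$ distinct vertices of $S$. For step~(ii) I apply Lemma~\ref{lem:transform1} with this $T$, obtaining an automorphic image $W'$ of the walk visiting at least $pk$ vertices of $S\setminus T$. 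The one subtlety is timing: applying $\sigma\in\auttg$ preserves the time interval but moves the start vertex to some $\sigma(u_1)\in S$, so the agent cannot simply re-run $W'$ from its current position. I handle this by building the $k$-visiting walk to start at a time $t$ that is $n-1$ steps after the agent's current time, precomputing $\sigma$ and the target $\sigma(u_1)$, repositioning the agent onto $\sigma(u_1)$ during the intervening window (possible by Lemma~\ref{lem:reachability}, which guarantees any vertex is reachable within $n-1$ steps), and only then executing $W'$ over $[t,t']$. This costs $O(n)$ extra steps per progress walk.

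For the analysis, each progress walk reduces the number $m=s-|T|$ of unvisited vertices by a factor $(1-k/s)$, so $O((s/k)\log s)$ progress walks suffice to visit all of $S$, and the total span is $O\!\big((s/k)\log s\big)\cdot O\!\big(n+B(k)\big)$. The main obstacle is choosing the threshold $k$ that balances the $\Theta(n)$ repositioning term against the building term $B(k)$, while correctly handling the two regimes of the $\min$ in Lemma~\ref{lem:nextreachable} and the constraint $sr\le n$ that holds precisely in the regime where $r$ dominates. Taking $k=\Theta(s^{1/3})$ makes the repositioning contribution $O\!\big((s/k)\,n\log s\big)=O(n^{5/3}\log n)$ (using $s\le n$) and collapses the building contribution to $O\!\big((n^{5/3}+rn)\log n\big)$, giving the claimed bound (this choice is convenient rather than optimal, which is acceptable since the sharper Theorem~\ref{thm:epsilonexploration} supersedes it). Finally one checks that the disjoint time intervals used for repositioning and for the successive progress walks all fit within the lifetime, which holds since the total span is $o(n^2)\le\ell$.
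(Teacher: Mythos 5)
Your proposal is correct and follows essentially the same route as the paper's proof: build a cheap walk visiting $k$ vertices of $S$ by repeated application of Lemma~\ref{lem:nextreachable}, convert it via Lemma~\ref{lem:transform1} into a walk covering a $p$-fraction of new vertices, reposition to its start vertex in $n$ steps via Lemma~\ref{lem:reachability}, and iterate $O((|S|/k)\log|S|)$ times. The only difference is your threshold $k=\Theta(|S|^{1/3})$ versus the paper's $k=\lceil n^{1/3}\rceil$, which spares you the paper's separate treatment of the case $|S|=o(n^{2/3})$ but changes nothing substantive.
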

\newcommand{\prooforbitexplore}{
\begin{proof}
    In the following we assume that $|S| = \Omega(n^{2/3})$ as otherwise we can
    visit all vertices of $S$ in $O(n^{5/3})$ time steps via
    Lemma~\ref{lem:reachability}. 

    We build $W$ iteratively. Initially, $W$ is an empty temporal {walk.
    Denote by} $T \subseteq S$ the set of vertices visited by $W$ and by $t_W$
    the current last time step in which $W$ visits a vertex. First we describe a
    subroutine that yields a temporal walk {$W'$} that visits $\lceil
    n^{1/3}\rceil$ vertices of $S$, but without guarantee that the visited
    vertices are from $S \setminus T$.
    
    Choose $u\in S$ arbitrarily and let $W'$ start at $u$ in time step $t_W+n$.
    Denote with $X$ the set of vertices of $S$ that we have visited so far (initially
    $X = \{u\}$) during the construction of $W'$. Use
    Lemma~\ref{lem:nextreachable} to extend $W'$ by a vertex $u' \in S \setminus
    X$, set $X:=X\cup \{u'\}$, and repeat iteratively until $|X|=\lceil
    n^{1/3}\rceil$. Each application of Lemma~\ref{lem:nextreachable} yields a
    temporal walk that visits a vertex of $S\setminus X$ in $O((|X|+1) n/|S|+r)$
    time steps. Applying Lemma~\ref{lem:nextreachable} $\lceil
    n^{1/3}\rceil$-times in this fashion yields a temporal walk $W'$ that visits
    $\lceil n^{1/3}\rceil$ vertices of $S$ within $\sum_{i=0}^{\lceil
    n^{1/3}\rceil} O(i \cdot n/|S| + r)=O(n^{2/3} \cdot n/|S| + r n^{1/3})$ time
    steps, and as $|S|=\Omega(n^{2/3})$ this is bounded by $O(n +rn^{1/3})$.

    By Lemma~\ref{lem:transform1} there exists a temporal walk {$W_{\sigma}$} that
    starts at time step $t_W+n$ and visits at least $p n^{1/3}$ vertices of $S
    \setminus T$ with some $v \in S$ the first vertex of $W_{\sigma}$ and
    $p=(|S|-|T|)/|S|$, which can be obtained by first constructing $W'$ as
    outlined in the previous paragraph and then applying some automorphism
    $\sigma \in \auttg$ to~$W'$. By Lemma~\ref{lem:reachability} we also know
    that there exists a temporal walk $W_{v}$ that starts {at the vertex where $W$
    ends at time step $t_W$ and after $n$ time steps ends with the first vertex $v$ of
    $W_{\sigma}$.} We extend $W$ by $W_{v}$ and $W_{\sigma}$. See
    Fig.~\ref{fig:paths} for a sketch of this idea.
    
    We call one such extension of $W$ a \textit{phase} and the
    number of vertices we add to $T$ due to a phase the \textit{progress}. It is
    easy to see that as long as $p \geq \frac{1}{2}$ each phase yields at least
    $\frac{1}{2} n^{1/3}$ progress. It follows that after $n^{2/3}$ phases the
    fraction $p$ is less than $\frac{1}{2}$. Now, as long as $p \geq
    \frac{1}{4}$ each phase yields at least $\frac{1}{4} n^{1/3}$ progress, and
    after $n^{2/3}$ additional phases $p < \frac{1}{4}$. One can see that after
    $O(n^{2/3} \log n)$ phases we have visited all vertices of~$S$. Each phase
    yields an extension of $W$ that spans $O(n+rn^{1/3})$ time steps, i.e., we
    have fully constructed $W$ after $O((n^{5/3}+rn)\log n)$ time steps. 
\end{proof}
}
\full{\prooforbitexplore}
\begin{figure}
    \centering
    \includegraphics[width=\textwidth]{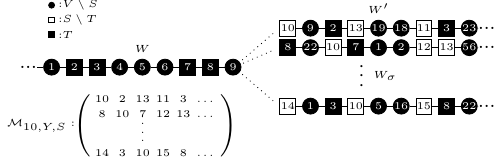}
    \caption{The construction scheme of Theorem~\ref{thm:orbitexplore}. $W$ is
    the temporal walk constructed so far. We aim to extend the walk with a walk
    $W_{\sigma}$ that visits many vertices of $S\setminus T$, where $T$ is the set of
    vertices of orbit $S$ that we have already visited. To find $W_{\sigma}$, we
    construct $W'$ and the automorphism matrix (bottom left)
    for the vertex with label $10$ (the start vertex of $W'$), the set $Y$ of vertices of $S$
    visited by $W'$, and the entire orbit $S$ as the set of possible start vertices of~$W_{\sigma}$.
    One of the rows in the matrix then gives us an automorphism $\sigma$ that, when applied to $W'$,
    yields the desired walk~$W_{\sigma}$.}
    \label{fig:paths}
\end{figure}

    {The following lemma is concerned with visiting a fraction $1/c$
of the vertices of a given orbit $S$ with a temporal walk. One significant
contribution to the number of time steps required by the temporal walk
constructed in Theorem~\ref{thm:orbitexplore} is the use of
Lemma~\ref{lem:transform1}. Roughly speaking, Lemma~\ref{lem:transform1}
provides a temporal walk that visits a large number of unvisited vertices, but
with the caveat that every vertex of $S$ can potentially be the start vertex of this
transformed walk (instead of restricting the potential start vertices
for the walk to a smaller subset, which might be reachable more quickly).
The consequence of this is that for each such transformation we require,
we must plan a ``buffer'' of $n$ time steps to ensure that all vertices of $S$ are
reachable by the time step in which the transformed walk starts (Lemma~\ref{lem:reachability}).
Corollary~\ref{cor:transform2} provides a ``trade-off'' for this: a decrease in
the set of possible start vertices of the transformed walk $W_{\sigma}$
decreases the number of previously unvisited vertices $W_{\sigma}$ visits, but
also decreases the number of time steps required to reach the first vertex of
$W_{\sigma}$. Using this property we construct a recursive algorithm that visits
a fraction of the vertices of $S$ quickly instead of applying the iterative construction of
Theorem~\ref{thm:orbitexplore}. In our recursive construction,
the walks we concatenate shrink with each recursive call. If we were to use
Lemma~\ref{lem:transform1} during this, we would have an additional $n$ time
steps with each recursive call. Instead, Corollary~\ref{cor:transform2} lets us
reduce the number of possible start vertices dramatically. The time span required
by this walk is then not dependent on $n$, but dependent on $|S|$ and $r$ (the
orbit number), and thus is especially useful for exploring smaller orbits.
\full{Later in this section we show how this }\conf{This }can
then be used iteratively to construct a
temporal walk that visits all vertices of $S$, which we in turn use to visit all
vertices $V$ by visiting all orbits one after the other. }

\newcommand{\ww}{u}

\begin{lemma}\label{lem:vtexplorationub} Let $\mathcal{G}$ be a connected
    temporal graph with lifetime $\ell$ and vertex set $V$. Let $S \in \torb$
    and let $r=|\torb|$ be the orbit number. For any $t\in[\ell]$ and any
    {$\ww \in S$} there exists a temporal walk $W$ that starts at vertex
    $\ww$ in time step $t$ and visits a fraction $1/c$ (for any $1 < c <\
    |S|$) of the vertices of $S$ such that $W$ spans $O(rc(|S|/c)^{\phi(c)}
    \log |S|)$ time steps, with $\phi(c)=1/(\log {f(c)})$ and
    $f(c)=(1+(c-1)/c)$.

\end{lemma}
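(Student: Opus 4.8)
The plan is to define a recursive procedure and analyze its cost via a recurrence. Let $N(k)$ denote the number of time steps needed, starting from an arbitrary vertex of $S$ at an arbitrary time step, to construct a temporal walk visiting at least $k$ distinct vertices of $S$. The target is $N(\lceil |S|/c\rceil)$, since a fraction $1/c$ of $S$ corresponds to $\lceil |S|/c \rceil$ vertices. The base case is $N(1)=0$, as the start vertex is already visited. The whole recursion operates inside the single fixed orbit $S$, so the only quantities that enter the bound are $|S|$ and $r$.

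For the recursive step with target $k$, set $k_1 = \lceil k/f(c)\rceil$ and proceed in three stages. First, recursively build a walk $W_1$ from the given start vertex $\ww$ at time $t$ visiting a set $T\subseteq S$ of $k_1$ vertices; it ends at some vertex $w_1$ at time $t+N(k_1)$. Second, apply Lemma~\ref{lem:nextreachable} with $h=c\,k_1$ (note $c\,k_1 \le |S|$ at every level, since at the top level $c\,k_1 \approx |S|/f(c) < |S|$ and $k$ only shrinks deeper in the recursion) to obtain a set $X\subseteq S$ with $|X|=c\,k_1$ such that every vertex of $X$ is reachable from $w_1$ within $O(hr+r)=O(c\,k_1\,r)$ time steps; here I use the $hr$ branch of the $\min$ to keep the connecting cost free of any dependence on~$n$. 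Third, fix any vertex $v_0\in S$ and recursively build a walk $W_2'$ from $v_0$, starting at the arrival time $t_2 := t+N(k_1)+O(c\,k_1\,r)$, that visits $k_1$ vertices of~$S$. Now invoke Corollary~\ref{cor:transform2} with this $W_2'$, the already-visited set $T$ (so $|T|=k_1$), and the reachable set $X$ (so $c'=|X|/|T|=c$): it yields an automorphism $\sigma$ whose image start vertex $u'=\sigma(v_0)$ lies in $X$ and such that $W_2:=\sigma(W_2')$ visits at least $(c-1)/c\cdot k_1$ vertices of $S\setminus T$. Since $u'\in X$ is reachable from $w_1$ by time $t_2$, we route from $w_1$ to $u'$, wait until $t_2$, and append $W_2$. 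The concatenated walk visits at least $k_1+(c-1)/c\cdot k_1 = k_1\,f(c)\ge k$ distinct vertices of $S$, which gives the recurrence $N(k)\le 2\,N(k_1)+O(c\,k_1\,r)$.

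It remains to solve $N(k)\le 2N(k/f(c))+O(c\,r\,k/f(c))$. Expanding the recursion tree, the depth is $D=\log_{f(c)} k = \phi(c)\log k$, so there are $2^D = k^{\phi(c)}$ leaves, each contributing $O(r)$ from the base case, for a total leaf cost of $O(r\,k^{\phi(c)})$. The internal cost at level $i$ is $O(c\,r\,k/f(c))\cdot (2/f(c))^i$; since $1<f(c)<2$ we have $2/f(c)>1$, so the geometric sum is dominated by its last term, giving internal cost $O(c\,r\,k/f(c))\cdot (2/f(c))^D = O(c\,r\,k^{\phi(c)})$. Hence $N(k)=O(c\,r\,k^{\phi(c)})$, and substituting $k=\lceil |S|/c\rceil$ yields $O(r\,c\,(|S|/c)^{\phi(c)})$. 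I would keep the extra factor $\log|S|$ as a uniform upper bound: as $c$ grows the ratio $2/f(c)$ approaches~$1$ and the geometric series degrades to a sum of $D=O(\log|S|)$ nearly equal terms, and absorbing this (together with the ceilings introduced by $k_1$) into one $\log|S|$ factor produces exactly $O(r\,c\,(|S|/c)^{\phi(c)}\log|S|)$.

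The main obstacle I anticipate is the careful orchestration of the second recursive call with the transformation and reachability guarantees: the walk $W_2'$ must be fixed (from an arbitrary but predetermined $v_0$) \emph{before} Corollary~\ref{cor:transform2} selects the automorphism $\sigma$, and one must verify that the start vertex $u'=\sigma(v_0)$ it returns is among the $X$-vertices that Lemma~\ref{lem:nextreachable} guarantees reachable by time $t_2$, so that the three pieces concatenate into a single time-respecting walk without timing conflicts. Confirming that $|X|=c\,k_1$ stays within $|S|$ at every recursion level, and that the $hr$ rather than the $hn/|S|$ branch of Lemma~\ref{lem:nextreachable} is available, are the technical points that make the final bound depend only on $|S|$ and $r$.
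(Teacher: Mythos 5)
Your proposal is correct and follows essentially the same route as the paper's proof: the same recursive doubling scheme, the same use of Lemma~\ref{lem:nextreachable} with $h=ck$ (taking the $O(hr)$ branch) to restrict the transformed walk's start vertices to a reachable set $X$ of size $ck$, the same application of Corollary~\ref{cor:transform2} with ratio $|X|/|T|=c$, and the same recurrence $T(f(c)k)=2T(k)+O(rck)$ solved by bounding the near-geometric sum, retaining the $\log|S|$ factor. The only differences are cosmetic (top-down versus bottom-up statement of the recurrence).
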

\newcommand{\proofvtexplorationub}{
\begin{proof}
   We use a recursive construction based on the following idea: If we
   can recursively construct temporal walks that visit $k$ vertices
   of~$S$, then we can construct two such walks $W_1$ and $W_2$,
   transform $W_2$ into a walk $W_{\sigma}$ that visits $k(c-1)/c$ vertices of $S$
   that have not been visited by $W_1$, and combine the two
   walks to get a walk that visits $f(c)\cdot k$ vertices of~$S$.
   By allowing only $ck$ potential start vertices for $W_{\sigma}$,
   $O(rck)$~time steps suffice for reaching the start vertex
   of $W_{\sigma}$ from the end vertex of~$W_1$.

   In more detail,
   assume that, for some $k$ with $1 \leq k \leq |S|/c$, $W_{1}$
   is a temporal walk starting at a vertex of $S$ at some time step $t'$,
   ending at some time step $t''$, and visiting $k$ different vertices of $S$.
   Denote the set of these $k$ different vertices with $T\subseteq S$. Let
   {$W_{2}$} be an arbitrary temporal walk starting at time step
   {$t''+O(rck)$} that also visits $k$ different vertices of $S$, and assume that
   {$W_{2}$} starts at a vertex of $S$.
   By Corollary~\ref{cor:transform2}, we know that for any subset $X$ of $S$ of size
   $ck$ there exists a {temporal walk $W_{\sigma}$}, obtained by
   applying an automorphism $\sigma$ to $W_2$, that starts at some $x \in
   X$ and visits $((c-1)/c)k$ vertices of $S \setminus T$.

    For any integer $x$, we denote with $T(x)$ the number of time steps our
    construction scheme requires for a temporal walk that visits $x$ different
    vertices of~$S$, assuming that we start on a vertex of $S$. Then, $T(k)$
    time steps suffice to find the initial temporal walk $W_{1}$ as
    well as $W_{2}$. Lemma~\ref{lem:nextreachable}, applied with $h=ck$,
    lets us reach $ck$ vertices of $S$ within $O(ckr)$ time steps
    (note that $ck\le|S|$ since we assume $k\le |S|/c$). Using those $ck$ vertices
    as possible start vertices in the application of
    Corollary~\ref{cor:transform2}, we can therefore reach the start vertex
    of $W_{\sigma}$ in $O(ckr)$ time steps.

    To construct $W$, we first recursively construct a temporal walk $W_{1}$
    starting from $\ww$ and visiting $k$ vertices of $S$ in $T(k)$ time steps.
    Afterwards we use $O(rck)$ time steps to construct a set $S^{\mathrm{R}}$ of
    $ck$ vertices of $S$ that are reachable from the last vertex of
    $W_{1}$. Taking an arbitrary vertex of $S^{\mathrm{R}}$ as start
    vertex, we construct a walk {$W_{2}$} visiting $k$ vertices
    of $S$ in $T(k)$ time steps. By using Corollary~\ref{cor:transform2},
    we can turn {$W_{2}$} into a walk {$W_{\sigma}$} visiting
    $(c-1)/c \cdot k$ vertices of $S$ that are not yet visited by
    $W_{1}$. Combining $W_{1}$ with a walk from the last
    vertex of $W_{1}$ to the start vertex of
    {$W_{\sigma}$} and {finally} with {$W_{\sigma}$} we get a walk
    $W$ that visits $k+(c-1)/c \cdot k=f(c)k$ vertices of $S$ in $2T(k)+O(rck)$
    time steps. Overall, we obtain the recurrence $T(f(c)k)
    = 2T(k)+O(rck)$, which represents the number of time steps used to construct a
    walk $W$ visiting $f(c)k$ different vertices of $S$ if we start at a vertex
    of $S$. If we start to expand the recursive calls of the function as shown
    below (where $\beta$ is the constant hidden in the $O(rck)$ term), we can
    rewrite the function as a sum:
    \begin{align*} 
        T(k)    &= \beta rck/f(c) + 2T(k/f(c))\\ 
                &=  \beta rck/f(c) + 2\beta rck/f^2(c) + 4T(k/f^2(c))\\ 
                &=  \beta rck/f(c) + 2\beta rck/f^2(c) + 4\beta rck/f^3(c) + 8T(k/f^3(c))\\ 
                &= \ldots
    \end{align*}
    Recall that $\phi(c)=1/(\log {f(c)})$. After $\phi(c) \log k=\log_{f(c)} k$ recursive
    calls (i.e., at recursion depth $\phi(c)\log k$), it is guaranteed that the argument to $T(\cdot)$ is of size at
    most~1. Due to the number of recursive calls we have at most $2^{\phi(c)\log k}=
    k^{\phi(c)}$
    leaf nodes in the recursion tree, with each such node contributing $T(1)=1$
    to the total number of required time steps. We then get:
    \begin{align*} 
        T(k)    &= \beta rck \sum_{i=1}^{\phi(c)\log k} \frac{1}{2}(\frac{2}{f(c)})^{i} + k^{\phi(c)}\\ 
                &\leq \frac{\beta rck}{2} (\frac{2}{f(c)})^{\phi(c) \log k}\phi(c)\log k + k^{\phi(c)}\\ 
                &\leq \frac{\beta rck}{2} (\frac{2}{f(c)})^{\phi(c) (\log_{\frac{2}{f(c)}} k) / (\log_{\frac{2}{f(c)}} 2)}\phi(c)\log k + k^{\phi(c)}\\ 
                &= O(rck^{1+\phi(c)/(\log_{\frac{2}{f(c)}} 2)} \log k + k^{\phi(c)})
                = O(rck^{1+\phi(c)\cdot(\log \frac{2}{f(c)})} \log k + k^{\phi(c)})\\
                &= O(rck^{1+\phi(c)\cdot(1 - \log {f(c)})} \log k + k^{\phi(c)}) 
                = O(rck^{1+\phi(c)\cdot(1 - \frac{1}{\phi(c)})} \log k + k^{\phi(c)}) \\
                &= O(rck^{1+\phi(c) - 1} \log k + k^{\phi(c)})
                = O(rck^{\phi(c)} \log k )
    \end{align*}
    Thus, we can visit $|S|/c$ vertices of $S$ in
    $T(|S|/c)=O(rc(|S|/c)^{\phi(c)} \log |S|) $ time steps. 
\end{proof}
}
\full{\proofvtexplorationub}
\begin{corollary}\label{cor:tinyfraction} Let $\mathcal{G}$ be a temporal graph
    with lifetime $\ell$ and vertex set $V$. Let $S \in \torb$ and $r=|\torb|$
    be the orbit number. For any $t\in[\ell]$, any $u \in S$, and any fixed
    $\epsilon > 0$,  there exists a temporal walk $W$ that starts at vertex $u$
    in time step $t$ and visits some constant fraction $\alpha < 1$  
    of the vertices of
    $S$ such that 
    $W$ spans $O(r|S|^{1+\epsilon})$ time steps.     
\end{corollary}
\newcommand{\prooftinyfraction}{
\begin{proof}
Take $f(c)=((c-1)/c)+1)$ and $\phi(c)=1/(\log {f(c)})$ from
Lemma~\ref{lem:vtexplorationub}. Note that $\phi(c)$ converges to $1$ from above for $c$
going to infinity. Therefore, for any given constant $\epsilon>0$,
there exists a constant $c$ such
that $\phi(c)\le 1+0.9\epsilon$.
Applying Lemma~\ref{lem:vtexplorationub} with that value of $c$
yields a temporal walk with
$O(rc(|S|/c)^{\phi(c)} \log |S|)
=O(r|S|^{1+0.9\epsilon}\log|S|)=O(r|S|^{1+\epsilon})$ time steps
that visits a $\frac{1}{c}$ fraction of the vertices in~$S$.
We can thus choose $\alpha=\frac{1}{c}$, and the proof is complete.
\end{proof}
}
\full{\prooftinyfraction}
We now present an improved version of Theorem~\ref{thm:orbitexplore} 
for exploring a whole orbit.

\begin{theorem}\label{thm:epsilonexploration} Let $\mathcal{G}$ be a temporal
    graph with lifetime $\ell$ and vertex set $V$. Let $S \in \torb$ and
    $r=|\torb|$ be the orbit number. For any $t\in[\ell]$, any $u \in V$, and
    any fixed $\epsilon > 0$, there exists a temporal walk $W$ that starts at
    vertex $u$ in time step $t$ and visits all vertices of $S$ such that $W$
    spans $O(r|S|^{1+\epsilon} + n \log |S|)$ time steps.
\end{theorem}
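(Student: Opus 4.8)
The plan is to mimic the iterative ``phase'' construction of Theorem~\ref{thm:orbitexplore}, but to replace its subroutine (which visits only $\lceil n^{1/3}\rceil$ vertices of $S$ per phase) by the far stronger subroutine of Corollary~\ref{cor:tinyfraction}, which visits a \emph{constant fraction} $\alpha<1$ of \emph{all} of $S$ in $O(r|S|^{1+\epsilon'})$ time steps. As in Theorem~\ref{thm:orbitexplore}, a walk produced by the subroutine may revisit vertices we have already seen, so I would compose it with Lemma~\ref{lem:transform1} to redirect the progress onto unvisited vertices. Throughout, I maintain the partial walk $W$ (initially empty), the set $T\subseteq S$ of vertices of $S$ already visited, and the last time step $t_W$ of $W$.

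A single phase proceeds as follows. I pick an arbitrary $v\in S$ and invoke Corollary~\ref{cor:tinyfraction} (with a parameter $\epsilon'<\epsilon$ to be fixed later) to obtain a walk $W'$ that starts at $v$ in time step $t_W+n$, visits $k=\alpha|S|$ vertices of $S$, and spans $O(r|S|^{1+\epsilon'})$ time steps. Setting $p=(|S|-|T|)/|S|$, Lemma~\ref{lem:transform1} yields an automorphism $\sigma$ for which $W_{\sigma}$ (the image of $W'$ under $\sigma$) starts at $\sigma(v)\in S$, spans the same number of time steps, and visits at least $pk=\alpha\,(|S|-|T|)$ vertices of $S\setminus T$. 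Because $\sigma(v)$ may be any vertex of $S$, I would reach it from the current endpoint of $W$ using the $n$-step buffer guaranteed by Lemma~\ref{lem:reachability}, and then extend $W$ by this connecting walk followed by $W_{\sigma}$, updating $T$ and $t_W$. Thus one phase spans $n+O(r|S|^{1+\epsilon'})$ time steps while reducing the number $U=|S|-|T|$ of unvisited vertices to at most $(1-\alpha)U$.

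Since the per-phase progress is proportional to $U$, the number of unvisited vertices decreases geometrically: after $j$ phases $U\le(1-\alpha)^{j}|S|$, so $O(\log|S|)$ phases bring $U$ below $1/\alpha=O(1)$, after which $O(1)$ further phases (each visiting at least one new vertex, as $pk>0$ whenever $T\neq S$) complete the exploration of $S$. Multiplying the per-phase span by the $O(\log|S|)$ phases gives a total of $O(n\log|S| + r|S|^{1+\epsilon'}\log|S|)$ time steps.

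The one remaining point---and the step I expect to require the most care---is absorbing the extra logarithmic factor into the $r|S|^{1+\epsilon}$ term. Here I would exploit that $\log|S|=O(|S|^{\delta})$ for every $\delta>0$: choosing $\epsilon'=\epsilon/2$ and $\delta=\epsilon/2$ gives $|S|^{1+\epsilon'}\log|S|=O(|S|^{1+\epsilon})$, so the total becomes the claimed $O(r|S|^{1+\epsilon}+n\log|S|)$. A minor subtlety to verify along the way is that each phase genuinely makes progress in the geometric regime (i.e.\ while $\alpha U\ge 1$) and that the handful of tail phases are correctly accounted for, but these affect only constants and the $O(\log|S|)$ phase count.
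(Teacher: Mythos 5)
Your proposal is correct and follows essentially the same route as the paper's proof: iterate Corollary~\ref{cor:tinyfraction} in phases, redirect each phase's walk onto $S\setminus T$ via Lemma~\ref{lem:transform1}, bridge between phases with the $n$-step buffer of Lemma~\ref{lem:reachability}, and absorb the extra $\log|S|$ factor by running the corollary with a slightly smaller $\epsilon'$. Your geometric-decay accounting ($U\le(1-\alpha)^{j}|S|$) is a slightly cleaner packaging of the paper's ``halving $p$'' argument, but the substance is identical.
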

\newcommand{\proofepsilonexploration}{  
\begin{proof}
    The proof is similar to the construction used in
    Lemma~\ref{lem:vtexplorationub}. We build the final temporal walk
    iteratively. Initially, $W$ is an empty temporal walk starting
    at $u$ at time step~$t$. Denote by $T$ the set
    of vertices visited by $W$ and by $t_W$ the current last time step of $W$.

    Corollary~\ref{cor:tinyfraction}, applied with $\epsilon'=0.9\epsilon$
    as the value for $\epsilon$ in the statement of that corollary, can be used to construct a temporal walk
    $W_1$ of $O(r|S|^{1+0.9\epsilon})$ time steps that visits $\alpha |S|$ vertices
    of $S$, for some constant $\alpha>0$, and starts at time step $t_W+n$. Using
    Lemma~\ref{lem:transform1} we can transform $W_1$ into a walk $W_{\sigma}$
    that visits $p \cdot \alpha |S|$ vertices of $S\setminus T$, with
    $p=(|S|-|T|)/|S|$. To reach the first vertex of $W_{\sigma}$ in time step
    $t_W+n$ we use standard reachability (Lemma~\ref{lem:reachability}).

    We call one such extension of $W$ a \textit{phase} and the number of
    vertices we add to $T$ due to a phase the \textit{progress}. As long as $p
    \ge 1/2$ each phase yields at least $\frac12 \alpha |S|$ progress. It follows
    that after $1/\alpha$ phases we have visited $|S|/2$ vertices of $S$. Now, as
    long as $p \ge 1/4$ we make $\frac14\alpha|S|$ progress per phase. After
    $1/\alpha$ such phases we have visited $3|S|/4$ vertices of $S$ in total. We can repeat this
    scheme for $O(\log |S|)$ phases to visit all vertices of $S$. Between any
    two consecutive phases we require $n$ time steps to reach the start vertex of the walk
    $W_{\sigma}$, as outlined in the previous paragraph. In total we get a
    temporal walk $W$ that spans $O((r|S|^{1+0.9\epsilon} + n) \log
    |S|)=O(r|S|^{1+\epsilon} + n \log |S|)$ time steps.
\end{proof}    
}
\full{\proofepsilonexploration}
By using the theorem above repeatedly for each orbit, we 
get a temporal walk for the whole temporal graph.

\begin{corollary}\label{cor:epsilonallorbits} Let $\mathcal{G}$ be a temporal
    graph with lifetime $\ell$ and vertex set $V$.
    For any fixed $\epsilon>0$, there exists a temporal walk
    $W$ that spans $O(r n^{1+\epsilon})$ time steps and visits all vertices of
    $V$, where $r=|\torb|$ is the orbit number.
\end{corollary}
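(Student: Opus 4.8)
The plan is to explore the orbits one at a time, applying Theorem~\ref{thm:epsilonexploration} separately to each orbit and concatenating the resulting walks into a single temporal walk for the whole graph. Write the $r$ orbits as $S_1,\dots,S_r$, so that $\{S_1,\dots,S_r\}$ partitions $V$ and $\sum_{i=1}^r |S_i| = n$. This reduces the corollary to a bookkeeping argument on top of the per-orbit bound we already have.

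First I would build the walk incrementally. Start the walk for $S_1$ at the desired start vertex in time step~$1$; Theorem~\ref{thm:epsilonexploration} produces a temporal walk visiting all of $S_1$ and ending at some vertex $v_1$ at some time step~$t_1$. For each subsequent orbit $S_{i+1}$, apply Theorem~\ref{thm:epsilonexploration} with start vertex $v_i$ (the endpoint of the previous subwalk) and start time step $t_i+1$; this is legitimate precisely because the theorem allows an arbitrary start vertex $u\in V$ and an arbitrary start time $t\in[\ell]$. Concatenating these subwalks yields a single temporal walk $W$ that remains time respecting, since each successive subwalk begins strictly after the previous one ends, and that visits every vertex of $V=\bigcup_{i} S_i$.

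For the time analysis, apply Theorem~\ref{thm:epsilonexploration} with a parameter $\epsilon'$ to be fixed at the end. The subwalk for $S_i$ spans $O(r|S_i|^{1+\epsilon'} + n\log|S_i|)$ time steps, so $W$ spans $O\!\big(\sum_{i=1}^r (r|S_i|^{1+\epsilon'} + n\log|S_i|)\big)$ time steps. I would bound the two sums separately. Since $|S_i|\le n$, we have $|S_i|^{1+\epsilon'}\le n^{\epsilon'}|S_i|$, and hence $\sum_{i} r|S_i|^{1+\epsilon'} \le rn^{\epsilon'}\sum_{i} |S_i| = rn^{1+\epsilon'}$, using that the orbits partition~$V$.

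The step I expect to be the main obstacle is the second sum, which is where the naive bound looks too large. Each of the $r$ terms $n\log|S_i|$ is at most $n\log n$, so $\sum_{i} n\log|S_i| \le rn\log n$. This is absorbed into the first bound only after observing that $\log n = O(n^{\epsilon'})$ for any fixed $\epsilon'>0$, giving $rn\log n = O(rn^{1+\epsilon'})$. Combining the two sums, $W$ spans $O(rn^{1+\epsilon'})$ time steps, and choosing $\epsilon'=\epsilon$ completes the proof.
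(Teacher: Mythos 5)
Your proposal is correct and follows essentially the same route as the paper: explore the orbits one after another, apply Theorem~\ref{thm:epsilonexploration} to each, and sum the per-orbit bounds (the paper bounds $\sum_i n_i^{1+\epsilon}$ via $\sum_i n_i^p \le (\sum_i n_i)^p$ while you use $|S_i|^{1+\epsilon'}\le n^{\epsilon'}|S_i|$, which is an equally valid bookkeeping step). Your observation that the theorem's arbitrary start vertex $u\in V$ makes the inter-orbit transition automatic is a small simplification over the paper, which instead inserts an explicit $n$-step transition via Lemma~\ref{lem:reachability}; either way the term is absorbed.
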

\newcommand{\proofepsilonallorbits}{ 
\begin{proof}
    Visit the orbits one after the other and spend $n$ time steps to go from one
orbit to the next (via Lemma~\ref{lem:reachability}). Let the sizes of the orbits be $n_1,n_2,\ldots,n_r$.
By applying Theorem~\ref{thm:epsilonexploration},
all vertices of an orbit of size $n_i$ can be visited
in $O(rn_i^{1+\epsilon}+n\log n_i)
=O(rn_i^{1+\epsilon}+n\log n)$ time steps.
All $r$ orbits can therefore be explored in
$O(\sum_{i=1}^r (r n_i^{1+\epsilon} + n \log n))=
O(r n^{1+\epsilon}+rn\log n)=O(r n^{1+\epsilon})$ time steps,
where the first
transformation follows from $\sum n_i^p \le (\sum n_i)^p$ for any $p\ge
1$.
\end{proof}
}
\full{\proofepsilonallorbits}
\section{Upper Bound for TRP}
\label{sec:rendezvous}%
Using Theorem~\ref{thm:epsilonexploration} we show that \textsc{trp} can be
solved by constructing a walk that spans (asymptotically) the same number of
steps as a walk for exploring an arbitrary single orbit. The idea is that the
two agents identify an orbit in which they meet, and then the first agent moves
to this orbit, and after $n$ time steps the second agent starts exploring this
orbit. For this the agents must be able to independently identify the same
orbit, for which we introduce some additional notation. We extend the definition
of isomorphism to temporal graphs as follows. Let $\mathcal{G}, \mathcal{H}$ be
two temporal graphs with lifetime $\ell$ and vertex sets $V_{\mathcal{G}}$ and
$V_{\mathcal{H}}$, respectively. We call a bijection $\theta: V_{\mathcal{G}}
\rightarrow  V_{\mathcal{H}}$ a \textit{temporal isomorphism} if $\theta$ is an
isomorphism from $G_t$ to $H_t$ for each $t\in [\ell]$ (and thus also an
isomorphism from $G$ to $H$, which denote the underlying graphs of $\mathcal{G}$
and $\mathcal{H}$, respectively). If clear from the context, we say
\textit{isomorphism} instead of \textit{temporal isomorphism}. 

We define an \textit{integer coloring} as a coloring of the vertices in the
vertex set $V$ of a temporal graph $\mathcal{G}$ (with the colors being integer
values). The assigned colors induce a partial order $\prec_\mathcal{P}$ on the
vertex set $V$ such that, for all vertices $u, v \in V$ that are assigned colors
$c_u$ and $c_v$, respectively, with $c_u\neq c_v$ it holds that $u
\prec_{\mathcal{P}} v$ if $c_u < c_v$. The idea is now that the two agents
compute the same integer coloring of the given temporal graph $\mathcal{G}$ with
the property that two vertices $u, v \in V$ are assigned the same color $c$ if
and only if $u, v \in S$, with $S$ some orbit of $\torb$. The agents then meet
at the smallest orbit, breaking ties via the coloring.

Note that, since the agents do not have access to consistent labels of the
vertices in~$V$, they are unable to distinguish between two vertices $u, v \in
S$ with $S$ being an orbit. Intuitively, the two agents $a_1$ and $a_2$ view
$\mathcal{G}$ as different temporal graphs $\mathcal{G}_1$ and $\mathcal{G}_2$,
respectively, such that
$\mathcal{G}_1\cong \mathcal{G}\cong \mathcal{G}_2$. A natural
idea is for the agents to pick a smallest orbit for their meeting, but the
challenge is how to ensure that the agents pick the same orbit if there are
multiple equal-size orbits that all have the smallest size. Therefore, in the
proof of the following lemma we let the agents iterate over all possible
temporal graphs until they find a graph $\mathcal{H}$ with
$\mathcal{G}_1\cong \mathcal{H}\cong \mathcal{G}_2$. Then both
agents compute an integer coloring for $\mathcal{H}$ as outlined in the previous
paragraph. This coloring is translated to a coloring of $\mathcal{G}_1$ by
agent~$a_1$ and to a coloring of $\mathcal{G}_2$ by agent~$a_2$ via isomorphism
functions, which are independently computed by the agents.

\begin{lemma}\label{lem:meeting} Let $\mathcal{G}$ be a temporal graph with
    vertex set $V$ and lifetime $\ell$ and let $a_1, a_2$ be two label-oblivious
    agents. There exists a pair of programs $(p_1, p_2)$ assigned to $a_1$ and
    $a_2$, respectively, such that each agent computes the same integer coloring
    of $V$ and such that two vertices $u, v \in V$ have the same color exactly
    if $u, v$ are in the same orbit of $\mathcal{G}/{\text{Aut}(\mathcal{G})}$.
\end{lemma}
\newcommand{\proofmeeting}{ 
\begin{proof}
    Let $I$ be a list of all temporal graphs with $n$ vertices and lifetime
    $\ell$ such that no two entries of $I$ are isomorphic to each other. For
    every entry $\mathcal{H}_i$ of $I$ we denote with
    $\mathcal{A}_{\mathcal{H}_i}$ a sorted (in arbitrary but deterministic
    fashion) list of all orbits in $\mathcal{H}_i/\text{Aut}(\mathcal{H}_i)$.

    Assume that each of the program $p_1$ and $p_2$ computes these
    aforementioned structures. Note that both programs compute the exact same
    structures as the result is independent of the labels of the temporal graph
    $\mathcal{G}$ as viewed by agents $a_1, a_2$, respectively.

    Let $V_1$ and $V_2$ denote the vertex set of $\mathcal{G}$ with the labels
    seen by $p_1$ and $p_2$, respectively. As the agents are label-oblivious, a
    vertex in $V$ can have different labels in $V_1$ and in $V_2$. Denote by
    $\mathcal{G}_1$ and $\mathcal{G}_2$ the temporal graph $\mathcal{G}$ with
    these new labeled vertex sets $V_1$ and $V_2$, respectively. Keep in mind
    that $\mathcal{G}_1\cong \mathcal{G}\cong \mathcal{G}_2$.
    
    Now, both programs $p_1$ and $p_2$ find the first entry $\mathcal{H}$ in $I$
    for which $\mathcal{G}_1\cong \mathcal{H}$ and
    $\mathcal{G}_2\cong \mathcal{H}$, respectively. Denote with
    $\theta_1: V_1 \rightarrow V(\mathcal{H})$ and $\theta_2: V_2 \rightarrow
    V(\mathcal{H})$ arbitrary isomorphisms computed by $p_1$ and $p_2$,
    respectively. The intuitive goal is now that both agents compute an integer
    coloring for $\mathcal{H}$ as sketched previously, and this coloring is
    translated to a coloring of $\mathcal{G}$ via these isomorphism functions. 

    We now show how to assign the colors to vertices. For $k=1,2$, the program
    $p_k$ iterates over all $S_i \in \mathcal{H}/\text{Aut}(\mathcal{H})$ (which
    are precomputed and sorted in a deterministic but arbitrary fashion) and for
    each $S_i$ assigns the color $i$ to all vertices in $\{\theta_k^{-1}(u) \mid
    u \in S_i\}$. Note that both programs iterate over
    $\mathcal{H}/\text{Aut}(\mathcal{H})$ in the same order.

It remains to show that the coloring computed by $p_1$ is the same as the
coloring computed by $p_2$. Assume for a contradiction that a vertex $u$ in
$\mathcal{G}$ is colored differently by $p_1$ and $p_2$. Let $u_1$ and $u_2$ be
the names of $u$ in $\mathcal{G}_1$ and $\mathcal{G}_2$, respectively. Since $u$
is colored differently by $p_1$ and~$p_2$, vertices $v=\theta_1(u_1)$ and
$w=\theta_2(u_2)$ are in different orbits of
$\mathcal{H}/\text{Aut}(\mathcal{H})$. Moreover, since $u_1$ and $u_2$ is the
same vertex, there is an isomorphism $\sigma$ between $\mathcal{G}_1$ and
$\mathcal{G}_2$ that maps $u_1$ to $u_2$. Now we get an automorphism
$\theta_2\circ \sigma \circ \theta_1^{-1}$ of $\mathcal{H}$ that maps $v$ to
$w$, and thus $v$ and $w$ are in the same orbit of
$\mathcal{H}/\text{Aut}(\mathcal{H})$, a contradiction. 
\end{proof}
}
\full{\proofmeeting}
We can now easily construct an algorithm for \textsc{trp}.  The agents
simply meet in a smallest orbit, breaking ties via the integer coloring
(Lemma~\ref{lem:meeting}). The first agent moves to said orbit,
then the
second agent searches the orbit for the first agent. Note that the smallest
orbit has size at most $n/r$, where $r$ is the orbit number. Thus, the bound
on the number of time steps provided by Theorem~\ref{thm:epsilonexploration}
becomes $O(r(n/r)^{1+\epsilon}+n\log (n/r))= O(n^{1+\epsilon})$, and we obtain
the following upper bound for~$\trp$.

\begin{theorem}\label{thm:rendezvous-time} Let $\mathcal{G}$ be a temporal graph
    with lifetime $\ell$ and $a_1, a_2$ two label-oblivious agents. For any
    fixed $\epsilon > 0$, there exists a pair of programs $(p_1, p_2)$ assigned
    to $a_1, a_2$, respectively, such that the two agents are guaranteed to meet
    after $O(n^{1+\epsilon})$ time steps.
\end{theorem}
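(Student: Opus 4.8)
The plan is to combine the orbit-identification result of Lemma~\ref{lem:meeting} with the single-orbit exploration bound of Theorem~\ref{thm:epsilonexploration}. First I would have both programs $p_1$ and $p_2$ invoke Lemma~\ref{lem:meeting} to compute the common integer coloring of $V$, in which two vertices share a color exactly when they lie in the same orbit of $\torb$. Using this coloring, each agent independently selects the \emph{same} target orbit~$S$: among all orbits of minimum size, take the one whose vertices receive the smallest color. Since the orbits partition $V$ into $r$ classes, a smallest orbit satisfies $|S|\le n/r$.

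Next I would specify the two (asymmetric) programs. Agent $a_1$ moves to an arbitrary vertex $v\in S$, which is reachable within $n$ time steps by Lemma~\ref{lem:reachability} (indeed within $r$ steps, by the consequence of Lemma~\ref{lem:laneorbit} noted earlier), and then simply waits at $v$ for the remainder of its walk. Agent $a_2$ first waits for $n$ time steps and then, starting at time step $n+1$ from its current position, executes the exploration walk guaranteed by Theorem~\ref{thm:epsilonexploration} that visits every vertex of~$S$. Note that $a_2$ need not know where $a_1$ is, precisely because it sweeps the whole of~$S$.

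For correctness I would argue that a meeting is forced. By time step~$n$ agent $a_1$ is stationed at some fixed $v\in S$ and remains there for all later steps, while the walk of $a_2$ visits all vertices of~$S$, in particular~$v$, at some time step after~$n$. At the step when $a_2$ reaches~$v$, agent $a_1$ is already present and still there, so the agents meet. For the running time, the detour and the wait contribute $O(n)$, and exploring $S$ via Theorem~\ref{thm:epsilonexploration} contributes $O(r|S|^{1+\epsilon}+n\log|S|)$; substituting $|S|\le n/r$ gives $r(n/r)^{1+\epsilon}=n^{1+\epsilon}/r^{\epsilon}\le n^{1+\epsilon}$ and $n\log|S|\le n\log n = O(n^{1+\epsilon})$ for any fixed $\epsilon>0$, so the total is $O(n^{1+\epsilon})$.

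The main obstacle is already discharged by Lemma~\ref{lem:meeting}: the genuine difficulty is getting the two label-oblivious agents to agree deterministically on the same orbit despite seeing inconsistent vertex labels and despite the possible presence of several equal-sized smallest orbits. Once that agreement is in hand, the remaining steps are routine applications of the exploration and reachability results. The only point that still needs a little care is the timing, namely ensuring that $a_1$ is provably stationary inside~$S$ before $a_2$ begins its exploration; the initial $n$-step wait by $a_2$ guarantees exactly this.
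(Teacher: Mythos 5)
Your proposal is correct and follows essentially the same route as the paper: both use Lemma~\ref{lem:meeting} to agree on a smallest orbit (of size at most $n/r$), have one agent move there and wait while the other explores it via Theorem~\ref{thm:epsilonexploration}, and bound the time by $O(r(n/r)^{1+\epsilon}+n\log(n/r))=O(n^{1+\epsilon})$. Your additional care about the initial $n$-step wait to synchronize the agents is a sensible explicit detail that the paper leaves implicit.
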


\section{Lower Bounds for TEXP and TRP}
\label{sec:lower-bounds}

We start this section with a simple lower bound for $\texp$, which is a
{fairly straightforward adaptation}
of the known lower bound of $\Omega(n^2)$ time steps~\cite{ErlebachHK21}.
Following that, we give a lower bound of $\Omega(n \log n)$ time steps for
$\trp$. For this we describe the construction of a temporal graph that is
connected and has only a single orbit. We then show how an adversary can choose
the starting positions of the two agents that want to meet in order to delay
their meeting. Intuitively, the graph we create is a cycle that changes
repeatedly after some number of steps. By our construction, the adversary can
make sure that after every change of the graph, the two agents are placed far
away from each other. In the end, we also show that the resulting lower bound
for $\trp$ yields a corresponding lower bound for $\texp$.

\begin{lemma}
\label{lem:rn-lower-bound}
    For any $1\le r\le n$, there exist $n$-vertex instances of $\texp$ with
    orbit number $r$ that require $\Omega(rn)$ time steps to be explored. 
\end{lemma}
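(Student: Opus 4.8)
The plan is to reuse the $\Omega(n^2)$ lower-bound construction of Erlebach et al.~\cite{ErlebachHK21} and to vary the amount of symmetry it contains so that the orbit number becomes exactly $r$ while the forced exploration time degrades gracefully from $\Omega(n^2)$ to $\Omega(rn)$. Recall that their construction is an always-connected temporal graph on $n$ vertices in which exploration is forced to proceed in $\Theta(n)$ \emph{phases}: in each phase the agent must spend $\Theta(n)$ time steps traversing the graph before the next previously-unvisitable vertices become available at their appointment window, and the $n/2+1$ orbits arise because the $n$ vertices come in symmetric twin-groups (pairs) that are indistinguishable at every time step. My first step is to describe (or cite) this construction and to isolate two independent knobs in it: the \emph{number of phases} and the \emph{sizes of the symmetric groups}.

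Next I would turn these two knobs together. For a target orbit number $r$, I partition the vertices into $\Theta(r)$ symmetric groups, each of size $\Theta(n/r)$, where the vertices of a group are mutual twins at every time step, so that each group is a single orbit and arbitrary within-group permutations are temporal automorphisms. The appointment backbone is rescaled so that there are $\Theta(r)$ phases, one per group. I then argue the three required properties. Connectivity at every time step is inherited unchanged from the backbone of~\cite{ErlebachHK21}. The orbit number is $\Theta(r)$: within-group permutations merge each group into one large orbit, while distinct groups occupy distinct positions in the appointment schedule and hence have distinct temporal neighborhoods, so no automorphism can map one group to another; I pin the count to \emph{exactly} $r$ by slightly adjusting the group sizes and, if a residual global symmetry such as a reversal survives, breaking it with a single marked vertex. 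The lower bound itself follows from the appointment structure: the visit-windows of consecutive groups are separated by $\Theta(n)$ time steps, and moving from the window of one group to that of the next costs $\Theta(n)$ time steps, so any walk visiting all $\Theta(r)$ groups must span $\Omega(rn)$ time steps.

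Two boundary remarks cover the full range $1\le r\le n$: for $r=O(1)$ the claim is subsumed by the trivial $\Omega(n)$ bound needed to visit all vertices of any connected temporal graph, and for $r=\Theta(n)$ the construction collapses back to that of~\cite{ErlebachHK21}, recovering $\Omega(n^2)=\Omega(rn)$. The main obstacle I expect is verifying that enlarging the symmetric groups does not let the agent make progress faster than one group per phase: I must show that the $\Theta(n/r)$ twins merged into a single orbit genuinely cost one full $\Theta(n)$-step phase rather than being collected together with several other groups inside one traversal, \emph{and} that this holds simultaneously with keeping every snapshot connected and the orbit count exactly $r$. Balancing these three constraints---the forced per-phase cost, always-connectivity, and the exact orbit number---is the delicate part of the argument, and it is here that the ``slight variation'' of the original construction must be carried out with care.
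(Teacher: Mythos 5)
Your high-level idea---re-parameterize the $\Omega(n^2)$ construction of Erlebach et al.\ so that the orbit number drops to $r$ while the forced exploration time degrades to $\Omega(rn)$---matches the paper's, but the specific decomposition you choose ($r$ orbit-groups of size $\Theta(n/r)$, each costing a full $\Theta(n)$-step phase) runs into exactly the obstacle you flag at the end, and that obstacle is not merely delicate: it conflicts with the mechanism that creates the delay in the cited construction. There, every snapshot is a star and the delay comes from rotating the center through many vertices; since any automorphism of a star on more than two vertices fixes the center, every vertex that ever serves as a center is fixed by every temporal automorphism and is therefore a singleton orbit. Hence a per-vertex delay of $\Theta(n)$ between newly visited vertices forces $\Theta(n)$ singleton orbits, and you cannot simultaneously have orbit number $r=o(n)$ and a $\Theta(n)$ cost per group with this gadget. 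You would need a genuinely new delay mechanism, which the proposal does not supply. (Your recollection of the source construction is also off: its $\frac n2+1$ orbits are $\frac n2$ singletons---the centers---plus one orbit of size $\frac n2$, not ``twin pairs'' with appointment windows.)

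The paper resolves the tension by inverting your product: instead of $r$ groups times $\Theta(n)$ per group, it keeps a single large orbit $B$ of $n-r+1$ indistinguishable leaves and rotates the star center through only $r-1$ vertices of a set $A$, so each center is a singleton orbit (orbit number exactly $r$) and the agent must wait $\Theta(r)$ steps to catch a center between any two new leaves of $B$; the bound is then $\Omega(r\,|B|)=\Omega(rn)$. You also leave the range $\frac n2<r\le n$ unaddressed: there the base construction cannot supply enough orbits by adding centers without weakening the bound, and the paper instead appends one extra snapshot (a path with pendant vertices) whose asymmetry splits $B$ into the required number of additional orbits while the $\Omega(n^2)$ cost of the preceding steps is already locked in.
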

\newcommand{\proornlowerbound}{ 
\begin{proof}
    Assume without loss of generality that $n$ is even. For $r=1$, the lower
    bound is trivial as any $n$-vertex temporal graph requires $\Omega(n)$ time
    steps to be explored. Next, assume $2\le r\le \frac{n}{2}$. The lower bound
    of $\Omega(n^2)$ presented by Erlebach et al.~\cite{ErlebachHK21} is based
    on the construction of a temporal graph $\mathcal{G}$ with vertex set $V$
    and lifetime $\ell$ such that at each time step $t \in [\ell]$ the graph
    $G_{t}$ forms a star as follows: the vertex set $V$ is partitioned into two
    sets $A$ and $B$, each of size~$\frac{n}{2}$. We modify this construction by
    letting $A$ consists of $r-1$ vertices and $B$ of the remaining vertices. In
    each of the first $r-1$ time steps, a different vertex of $A$ is chosen to
    be the center of the star. Vertices of $B$ are never centers of a star. This
    pattern repeats, i.e., in each time step $t>r-1$, the graph $G_t$ is equal
    to the graph $G_{t-(r-1)}$. The lifetime of the graph is set to $\ell=n^2$.
    By construction, all vertices of $A$ form their own orbit of size $1$ (each
    center is uniquely identified by the first time step where it is a center),
    and all vertices of $B$ form one orbit together. Thus, the graph has
    $(r-1)+1=r$ orbits. As shown by Erlebach et al., it takes $\Theta(r)$ time
    steps to go from one vertex of $B$ to another. Thus, visiting all vertices,
    in particular those in~$B$, takes $\Omega(r|B|)=\Omega(rn)$ time steps.
    Here, we have used that $r\le \frac{n}{2}$ implies $|B|\ge \frac{n}{2}+1$.

    {Finally, if $\frac{n}{2} <r \le n$, use the construction above for
    $r=\frac{n}{2}$ to create a temporal graph with $n^2$ time steps that
    requires $\Omega(n^2)$ time steps for exploration. Then, add one extra time
    step $t_f=n^2+1$ to split the orbit $B$ into the required number of orbits
    as follows. Assume we want to have orbit number $\frac{n}{2}+k$ for some
    $1\le k\le \frac{n}{2}$. Then $G_f$ consists of a path $P$ of
    $\frac{n}{2}+k-1$ vertices, made up of the $\frac{n}{2}-1$ vertices from $A$
    followed by $k$ vertices from~$B$, while the remaining $\frac{n}{2}+1-k$
    vertices of $B$ are attached as pendant vertices to the endpoint of the
    path~$P$ that lies in~$B$. The vertices of $B$ are split into $k+1$ orbits
    in this way: each of the vertices on $P$ forms an orbit by itself, while
    those attached as pendants to an endpoint of $P$ form one orbit together.
    Thus, the temporal graph has orbit number $\frac{n}{2}+k$, as desired.}
\end{proof}
}
\full{\proornlowerbound}

\begin{theorem}\label{thm:rendezvouslb}
    For any two agents $a_1$ and $a_2$ with arbitrary deterministic programs,
    there exist instances of $\trp$ where the agents
    require $\Omega(n \log n)$ time steps to meet.
\end{theorem}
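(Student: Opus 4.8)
The plan is to exhibit a single connected temporal graph with orbit number $1$ together with an adversarial placement that delays meeting for $\Omega(n\log n)$ steps. I would take $n=2^m-1$ and let every snapshot be a Hamiltonian cycle on vertices $\{0,\dots,n-1\}$: in phase~$1$ vertex $j$ is joined to $j\pm 1$, and in phase $i>1$ to $j\pm 2^{2i}$ (all arithmetic modulo $n$), each phase lasting $L=\lfloor n/16\rfloor$ steps. Since $n$ is odd, $\gcd(2^{2i},n)=1$, so each snapshot is a single cycle, and the cyclic shift $\rho_c:j\mapsto j+c$ is an automorphism of every snapshot; hence $\mathcal{G}$ is connected with a single orbit. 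Because the graph has a single orbit, the adversary can show each agent an identically labeled copy of $\mathcal{G}$ no matter where it places the agent, so a deterministic program makes a fixed sequence of moves, and the position of the agent started at $s$ at time $t$ is $s+d_t\bmod n$ for a displacement sequence $(d_t)$ depending only on the program.

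This yields a clean reduction. Writing $d^1_t,d^2_t$ for the two programs and $\Delta=x-y$ for the offset of the adversarially chosen start positions, the agents meet at time $t$ exactly when the \emph{gap} $g_t=\Delta+(d^1_t-d^2_t)$ is $\equiv 0\pmod n$. In phase $i$ each agent traverses at most one edge per step, so $g_t$ changes by a multiple of $2^{2i}$ of absolute value at most $2\cdot 2^{2i}$ per step; hence over a whole phase the gap changes by $c\cdot 2^{2i}$ with $|c|\le 2L$. Consequently the agents can meet during phase $i$ only if the \emph{phase-$i$ cycle distance} of the gap at the start of that phase, namely $\min(\,|g\cdot 2^{-2i}\bmod n|,\;n-|g\cdot 2^{-2i}\bmod n|\,)$, is at most $2L\approx n/8$.

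The heart of the argument is to keep the gap far in every phase's cycle by fixing only a few bits of $\Delta$. Here I would exploit that modulo $2^m-1$ we have $2^m\equiv 1$, so multiplication by $2^{\pm 2i}$ is a cyclic rotation of the $m$-bit representation by $2i$ positions; thus $g\cdot 2^{-2i}$ is $g$ rotated right by $2i$, and the phase-$i$ cycle distance exceeding $2L\approx n/8$ is governed by the five bits of $g$ at positions $2i-5,\dots,2i-1$ being in a ``middle'' configuration (neither near all-zero nor near all-one). I would fix bits of $\Delta$ adaptively, phase by phase, maintaining the invariant that when phase $i$ begins the bits of $\Delta$ below position $2i$ are already fixed and the window controlling phase $i$ is in the middle range. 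Phase~$1$ is special (its step is $2^0$): fixing the five highest-order bits of $\Delta=g_1$ directly places it in the middle range and rules out a meeting in phase~$1$. After phase~$1$ I fix the lowest four bits, which together with the already-fixed top bit supply the window for phase~$2$; thereafter, before each phase I fix the next-higher two bits. Because the lower bits of $\Delta$ are fixed by then, the carry into position $2i$ of the sum $g=\Delta+(d^1-d^2)$ is determined, so the two free bits can be chosen to push the five-bit window into the middle range (its three lower bits having been set for the previous phase).

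Finally I would count: the five top bits are consumed once for phase~$1$, and thereafter two bits are fixed per phase, climbing upward from position~$4$. The process can continue until these climbing bits meet the fixed top bits, i.e.\ for $\Omega(m)=\Omega(\log n)$ phases. Each phase has length $L=\Theta(n)$ and admits no meeting, so the agents are kept apart for $\Omega(n\log n)$ steps, which is the claimed bound. The step I expect to be the main obstacle is precisely the bit-fixing: one must verify that two free bits per phase really do suffice to force the controlling window into the middle range in the presence of the (determined but a priori unknown) carry and of the lower window bits fixed earlier, and that the margin $2L\approx n/8$ is robust both to the floor in $L=\lfloor n/16\rfloor$ and to controlling only $O(1)$ bits of the window. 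Managing these carries and boundary effects cleanly --- rather than the overall counting --- is where the real work lies.
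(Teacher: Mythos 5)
Your proposal is correct and is essentially the paper's own argument: the identical single-orbit temporal graph (phase~$1$ on the unit-step cycle, phase $i>1$ with step $2^{2i}$, phases of length $\lfloor n/16\rfloor$), the same reduction to program-independent displacement sequences, and the same bit budget (five high bits, then four low bits, then two bits per phase) over $\Theta(\log n)$ phases; the only cosmetic difference is that you fix bits of the offset $\Delta=x-y$, whereas the paper fixes bits of each agent's start position separately and reasons about sections of the cycle $C_{i+1}$. The carry-management step you flag as the main obstacle is dispatched in the paper without any bit-level bookkeeping: the four settings of the two free bits shift the start-of-phase gap by $0,1,2,3$ multiples of $2^{2i}$, i.e.\ by quarter-turns $(n+1)/4$ of the phase-$(i+1)$ cycle, and since the set of gaps from which a meeting is reachable within one phase is an arc of length at most $4\lfloor n/16\rfloor<(n+1)/4$ around zero, at most one of the four settings can be bad.
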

\newcommand{\proofrendezvouslb}{ 
\begin{proof}
    We construct a temporal graph $\mathcal{G} = (G_1, \dots, G_\ell)$ with
    lifetime $\ell=n^2$ and vertex set $V = \{0, \dots, n-1\}$ with $n = 2^{m} -
    1$ for an arbitrarily large odd integer $m > 10$. We define the static graph
    $C_i$ with vertex set $V$ to be the cycle where each vertex $u \in V$ is
    adjacent to vertices $(u + n_i) \bmod n$ (the \emph{clockwise} neighbor) and
    $(u - n_i) \bmod n$ (the \emph{counter-clockwise} neighbor) for $n_i =
    2^{2i}$. {Note that $C_i$ is indeed a cycle because $n_i=2^{2i}$ and
    $n=2^m-1$ are coprime.} In what follows we use $n_i$ and $2^{2i}$
    interchangeably. For our construction (see Fig.~\ref{fig:lb} for an
    illustration), we will use the cycles $C_i$ for $i\in
    \{0\}\cup\{2,3,\ldots,\frac{m-7}{2}\}$, i.e., we will use the cycles $C_0,
    C_2,C_3, C_4,\ldots,C_{(m-7)/2}$. For each cycle $C_i$, we define the $k$-th
    \textit{section} of the cycle as $S_{i,k} = \{ u \mid u \in V \text{ and } u
    \bmod 2^{2i} = k \}$. There are $2^{2i}$ different sections in cycle $i$,
    one for each $k$ with $0\le k < 2^{2i}$. Each one contains $\lceil
    \frac{n}{2^{2i}} \rceil$ vertices, except the last one which contains
    $\lfloor \frac{n}{2^{2i}} \rfloor$ vertices due to parity reasons ($n$ is an
    odd number that is one smaller than a power of two). Intuitively, a section
    contains vertices with the same modulo value, and these vertices (in
    ascending order) form a consecutive part of the cycle~$C_i$. {Note that
    any $s$ sections contain together at least $s\cdot \frac{n}{2^{2i}} - 1$
    vertices.} The first and last vertex of a section is connected to the last
    vertex of the previous section and the first one of the next section,
    respectively.

    Furthermore, we define \textit{phase} $i$, for $i\ge 1$, of $\mathcal{G}$ as
    a number of consecutive time steps from $t\in [\ell]$ to $t' \in [\ell]$
    such that $G_{t''}=G_{t}$ for all $t \leq t'' \leq t'$ and $G_{t-1} \neq
    G_{t}$ (or $t=1$) and $G_{t'} \neq G_{t'+1}$. We set the \textit{duration}
    of each phase to $K = \lfloor n/16 \rfloor$. During phase $1$, the graphs
    $G_t$ are all equal to~$C_{0}$. During phase $i$ for $i>1$, the graphs $G_t$
    are all equal to $C_i$. We continue the construction for $(m-7)/2
    =\Theta(\log n)$ phases.
	
    Notice that the temporal graph $\mathcal{G}$ is constructed in such a way
    that it has only one orbit {as for every $d$ the function $\sigma$ with
    $\sigma(u)=(u+d)\bmod n$ for all $u\in V$ is an automorphism.} Hence, the
    vertices are indistinguishable to the two agents. Therefore, the
    deterministic programs that the agents run must execute a sequence of steps
    that is independent of the starting vertex. {(The adversary can let
    each agent see vertex labels that make the agent's view the same no matter
    in which vertex of $\mathcal{G}$ the agent is placed at time~$1$.)} The
    steps executed by an agent can be either edge traversals (clockwise or
    counter-clockwise) or waiting at a current vertex. In what follows, we show
    that independently of the sequence of steps that each agent's program
    performs, an adversary can choose the starting vertex for each agent in such
    a way that they will need $\Omega(n \log n)$ time steps to meet.
	
	\begin{figure}[hb!]
		\centering
		\includegraphics[width=\textwidth]{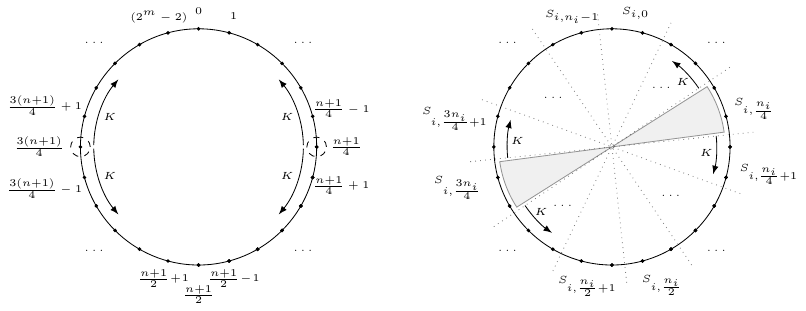}
		\caption{The initial cycle of phase $1$, $C_0$, on the left. Initially,
		the agents are placed on the vertices marked with dashed circles and can
		move up to $K$ vertices away in phase~$1$. In each other phase $i \geq
		2$, the cycle $C_i$ is viewed as consisting of sections, as shown on the
		right. At the beginning of the phase, the agents are placed in vertices
		of two sections (shown grey) that are sufficiently far apart, and they
		can move up to $K$ vertices away from the boundary of their section
		before the next phase starts.}
		\label{fig:lb}
	\end{figure}
	
    We are going to use bit vectors of length $m = \lceil \log n \rceil$ to
    represent a vertex, i.e., the bit vector $\vec{b}=(b_{m -1} b_{m - 2} \dots
    b_1 b_0)$ refers to the vertex whose binary representation is~$\vec{b}$.
    Initially, we place the agent $a_1$ at vertex $u = \frac{1}{4}(n + 1)$ and
    agent $a_2$ at vertex $v=\frac{3}{4}(n + 1)$ of $C_0$. Note that the binary
    representation of $u$ is $(010\cdots0)$ and that of $v$ is $(110\cdots0)$,
    i.e., the lowest $m-2$ bits of both bit vectors are all zero.
    
    During phase~$1$, each agent can move up to $K$ vertices away from its start
    vertex. Given the initial placement of the agents and the size of the cycle,
    it is guaranteed that they cannot meet in phase~$1$: {Their initial
    distance is roughly $n/2$, and even if both of them move $K$ steps closer to
    the other agent, their distance will only reduce by $2K\le n/8$.
    Furthermore, even if we shift each agent's start vertex by at most $n/8$
    positions, they still cannot meet in phase~$1$: Their initial distance will
    be at least roughly $n/4$, and their distance can reduce by at most $n/8$ in
    phase~$1$.  We will actually shift the start positions of both agents only by
    fixing the lowest $m-5$ bits of their bit vectors, so their start positions
    will be shifted by at most $2^{m-5}-1\le n/32$, and it is clear that the
    agents will not be able to meet in phase~$1$ no matter what the exact amount
    of shift is. Our strategy will be to fix the lower-order bits of the start
    positions of the agents in such a way that the agents cannot meet in
    phase~$i$ for $2\le i\le \frac{m-7}{2}$ either.}

    By adjusting the lowest $4$ bits of $u$ and $v$, i.e., bits $b_0, b_1, b_2,
    b_3$ of their respective bit vectors, we can ensure that agent $a_1$ ends
    phase~$1$ at a vertex $u_1$ with $u_{1} \bmod {16} = 4$ and agent $a_2$ ends
    phase~$1$ at a vertex $u_2$ with $u_{2} \bmod {16} = 12$. This is because
    fixing the bits $b_0, b_1, b_2, b_3$ in one of the 16 possible ways will add
    an offset $\Delta\in\{0,1,2,\ldots,15\}$ to the start position of an agent,
    and the same offset is added to the position in which the agent ends
    phase~$1$ (because every edge traversal adds or subtracts a fixed value to
    the current position). As the agents end phase~$1$ in vertices $u_1$ and
    $u_2$ satisfying $u_{1} \bmod {16} = 4$ and $u_{2} \bmod {16} = 12$, the
    first agent is placed in a vertex in section $S_{2, \frac{1}{4} 2^{2 \cdot
    2}} = S_{2, 4}$ of $C_2$ while the second agent is placed in a vertex in
    section $S_{2,  \frac{3}{4} 2^{2 \cdot 2}} = S_{2, 12}$ of $C_2$ in the
    beginning of phase~$2$. Sections $S_{2,4}$ and $S_{2, 12}$ are separated by
    seven other sections, which together span at least $7\cdot \frac{n}{16} -
    1\ge n/4$ vertices. Therefore, the agents cannot meet during phase~$2$.

    Now, depending on the sequence of steps taken by the two agents in
    phase~$2$, we will show that it suffices for the adversary to fix the next
    two bits, i.e., bit $b_4$ and $b_5$ of the bit vectors of the start
    positions of the two agents, in order to guarantee that the two agents are
    placed sufficiently far from each other at the beginning of phase~$3$ and
    therefore cannot meet in that phase either. Observe that fixing bits $b_4$
    and $b_5$ does not change the value of $u_1\bmod 16$ and $u_2\bmod 16$.
    Thus, the sections in which the agents begin phase~$2$ are not altered by
    fixing bits $b_4$ and~$b_5$. This process then repeats for all phases $i>2$:
    Depending on the sequence of steps taken by the agents in phase~$i$, the
    adversary fixes bits $b_{2i}$ and $b_{2i+1}$ of the binary representations
    in such a way that the agents begin phase~$i+1$ in sections of $C_{i+1}$
    that are sufficiently far from each other (without altering the sections in
    which the agents have started any of the previous phases).

    We have already shown that the two agents cannot meet in phase~$1$ or~$2$.
    In the following, we show how to fix bits $b_{2i}$ and $b_{2i+1}$ at the end
    of phase~$i$ for $i\ge 2$ in such a way that the two agents cannot meet in
    phase $i+1$ (in cycle $C_{i+1}$) either.
	
   At the end of phase~$i$, we have four options to choose from for the section
   of $C_{i+1}$ where we would like agent $a_1$ to start phase~$i+1$. Let $z$ be
   the value determined by the lowest $2i$ bits of $a_1$'s current position (at
   the end of phase~$i$). Note that $z\le 2^{2i}-1<\frac14\cdot
   2^{2(i+1)}=\frac14 n_{i+1}$. The four choices $00$, $01$, $10$ and $11$ for
   bits $b_{2i}$ and $b_{2i+1}$ of $a_1$'s start position produce positions for
   $a_1$ at the end of phase~$i$ that are equal to $z$, $z+2^{2i}$, $z+2\cdot
   2^{2i}$, and $z+3\cdot 2^{2i}$ modulo $2^{2(i+1)}$, corresponding to sections
   $S_{i+1,z}$, $S_{i+1,z+\frac14 n_{i+1}}$, $S_{i+1,z+\frac12 n_{i+1}}$,
   $S_{i+1,z+\frac34 n_{i+1}}$ of $C_{i+1}$. Analogously, there is a $z'<\frac14
   n_i$ such that agent~$a_2$ can be placed in any of the four sections
   $S_{i+1,z'}$, $S_{i+1,z'+\frac14 n_{i+1}}$, $S_{i+1,z'+\frac12 n_{i+1}}$,
   $S_{i+1,z'+\frac34 n_{i+1}}$ of $C_{i+1}$ at the start of phase~$i+1$. We
   choose bits $b_{2i}$ and $b_{2i+1}$ for $a_1$ and $a_2$ in such a way that
   $a_1$ begins phase $i+1$ in $S_{i+1,z}$ and $a_2$ begins phase $i+1$ in
   $S_{i+1,z'+\frac12 n_{i+1}}$. The sections in which the two agents start
   phase $i+1$ are then separated by at least $\frac14 n_{i+1}$ sections in
   $C_{i+1}$, and these $\frac14 n_{i+1}$ sections contain together at least
   $\frac14 n_{i+1} \cdot \frac{n}{n_{i+1}} - 1 = \frac{n}{4}-1$ vertices. As
   each agent can move over at most $K=\lfloor n/16\rfloor$ vertices in phase
   $i+1$, it is impossible for the agents to meet in phase~$i+1$.

    We iterate this construction until phase $\frac{m-7}{2}$. (For all time
    steps from the end of phase $\frac{m-7}{2}$ until the lifetime $\ell=n^2$ is
    reached, we can let the graph be equal to $C_0$.) By construction, the
    agents cannot meet in any of the phases from $i$ to $\frac{m-7}{2}$. As we
    have fixed the four lowest bits of the start positions of the two agents at
    the end of phase~$1$ and fixed two further bits in each of the remaining
    $\frac{m-9}{2}$ phases, we have fixed the lowest $4+(m-9)=m-5$ bits of the
    start positions in this way. As observed at the start of the proof, this
    ensures that the agents cannot meet in phase~$1$ either. Hence, there are
    $\frac{m-7}{2}=\Theta(\log n)$ phases, each with $\lfloor n/16\rfloor$ time
    steps, during which the adversary can prevent the agents from meeting. This
    shows that $\Omega(n\log n)$ time steps are needed for the agents to meet.
\end{proof}
}
\full{\proofrendezvouslb}
\begin{corollary}\label{cor:explorationlb} There exist connected temporal graphs
    $\mathcal{G}$ with vertex set $V$, lifetime $\ell$ and a single orbit such
    that all temporal walks $W$ require $\Omega(n \log n)$ time steps to visit
    all vertices of $V$.
\end{corollary}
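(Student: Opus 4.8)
The plan is to reuse the very temporal graph $\mathcal{G}$ constructed in the proof of Theorem~\ref{thm:rendezvouslb}: it is connected, has a single orbit, and has lifetime $\ell=n^2$, so it already meets all the structural requirements of the corollary. I would then argue that a fast exploration walk for this graph would yield a fast rendezvous strategy, so that the $\Omega(n\log n)$ rendezvous lower bound carries over directly to $\texp$.

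Concretely, suppose for contradiction that there is a temporal walk $W$ that starts at some vertex at time step~$1$ and visits all vertices of $\mathcal{G}$ within $T$ time steps. Because $\mathcal{G}$ has a single orbit (its automorphism group acts transitively on $V$), applying a suitable automorphism $\sigma\in\auttg$ to $W$ yields, for \emph{every} vertex $w\in V$, a temporal walk $W_{\sigma}$ that starts at $w$ at time step~$1$ and still visits all vertices within $T$ time steps. Hence the graph can be explored in $T$ time steps no matter where the exploring agent is placed, and the exploration time is independent of the start vertex.

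From this I would build a pair of rendezvous programs $(p_1,p_2)$: program $p_1$ instructs its agent to wait at its start vertex, while program $p_2$ instructs its agent to compute (from its own, possibly differently labeled, view of $\mathcal{G}$) an exploration walk from its start vertex and execute it. Since $p_2$'s walk visits every vertex of $\mathcal{G}$ within $T$ time steps, it in particular reaches the vertex at which $a_1$ is waiting, so the two agents meet within $T$ time steps regardless of the adversary's choice of start positions and labelings. By Theorem~\ref{thm:rendezvouslb}, for these specific programs there exist start positions that force a meeting time of $\Omega(n\log n)$; combining this with the guarantee that the programs always meet within $T$ steps gives $T=\Omega(n\log n)$, which proves the corollary.

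The only point requiring care is the transfer of the exploration walk across the agents' inconsistent labelings and the adversary's arbitrary start vertices, and this is exactly where the single-orbit property is essential: vertex-transitivity guarantees both that the exploration time is independent of the start vertex and that $p_2$ can always produce a valid exploration walk from whatever vertex it is placed on. No communication or knowledge of $a_1$'s position is needed, since $p_2$ simply visits all vertices; the meeting is then an automatic consequence of $a_1$ remaining stationary.
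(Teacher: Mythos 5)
Your proposal is correct and follows essentially the same route as the paper: assume a fast exploration walk exists, let one agent wait while the other explores, and invoke Theorem~\ref{thm:rendezvouslb} to get a contradiction. Your extra step of applying an automorphism to make the exploration walk start at any adversarially chosen vertex is a worthwhile explicit justification of a point the paper leaves implicit, but it does not change the substance of the argument.
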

\newcommand{\proofexplorationlb}{ 
\begin{proof}
    Assume for a contradiction that for every connected temporal graph
    $\mathcal{G}$ with $n$ vertices and a single orbit there exists a temporal
    walk that visits all vertices in $o(n\log n)$ time steps. This implies that
    \textsc{trp} can be solved in $o(n\log n)$ time steps for connected
    single-orbit temporal graphs: Let the agent $a_1$ perform a temporal walk
    that visits all vertices in $o(n\log n)$ time steps, and let the agent $a_2$
    remain at its initial vertex. By Theorem~\ref{thm:rendezvouslb}, no such
    solution to \textsc{trp} is possible.
\end{proof}
}
\full{\proofexplorationlb}
{The lower bounds of Theorem~\ref{thm:rendezvouslb} and
Corollary~\ref{cor:explorationlb} can be adapted to temporal graphs with $r$
orbits for any constant~$r$ as follows: Use the construction from the proof of
Theorem~\ref{thm:rendezvouslb}, but instead of letting the graph $G_t$ in each
time step be a single cycle~$C_i$, let $G_t$ contain $2r-1$ copies of~$C_i$, and
for each vertex $u$ of $C_i$ connect all copies of $u$ by a path $P_u$ (starting
with the vertex $u$ in the first copy of $C_i$ and ending with the vertex $u$ in
the $(2r-1)$-th copy of $C_i$). The resulting temporal graph has $r$ orbits: The
vertices of the `middle' copy of~$C_i$ form one orbit, and the vertices in the
two copies of $C_i$ that have distance $k$ from the middle copy, for $1\le k\le
r-1$, also form an orbit. Let $n'=n/r$ denote the number of vertices in one copy
of~$C_i$. By the arguments in the proof of Theorem~\ref{thm:rendezvouslb}, it
takes $\Omega(n'\log n')$ time steps for the two agents to reach a location in
the same path $P_u$, and thus $\trp$ requires $\Omega(n'\log n')$ time steps. As
$n'=n/r$ and $r$ is a constant, this gives a lower bound of $\Omega(n\log n)$
for $\trp$. The lower bound of $\Omega(n \log n)$ time steps for exploration of
temporal graphs with orbit number $r$ for any constant $r$ then follows as in
the proof of Corollary~\ref{cor:explorationlb}.}

\full{
\section{Efficient Computations via Randomization for Exploring
Orbits}\label{sec:rand}

In this section, we present a randomized algorithm for exploring a given orbit
$S \in \torb$ of a temporal graph $\mathcal{G}$ with high probability. Our goal
is to construct {efficiently} the temporal walk of Theorem~\ref{thm:orbitexplore} that visits
all vertices of $S$. {Recall that this walk spans $O((n^{5/3}+rn)\log n)$ time steps.} Our previously outlined algorithms for exploring the
orbit $S$ all use a subroutine that first constructs some temporal walk $W$ that
visits a certain fraction of the vertices of $S$, followed by iterating over all
automorphisms $\sigma \in \auttg$ and applying each $\sigma$ to $W$ until a
temporal walk $W'$ is obtained that visits sufficiently many previously unvisited vertices.
{The runtime of this routine is $\Omega(|\auttg| \cdot |W|)$ in addition to
computing the set of all automorphisms, unless they are given as an input. Our
goal in this section is to show an algorithm with a runtime that is independent
of $|\auttg|$ that computes our desired temporal walk. {
Our algorithm has a runtime that consists of $O(n^{1/3}\log(n/\epsilon))$
linear-time {considerations} of the snapshots that exist during the timesteps
of the computed temporal walk. As an additional result we give an algorithm
that finds the aformentioned temporal walk with probability $1$, with
the caveat of an expected runtime. {More exactly, the latter} algorithm has
an {expected runtime} that
consists of $O(n^{1/3})$ expected linear-time {considerations}.}

Unlike the deterministic algorithms of the previous sections, the randomized
algorithm does not require the set of all automorphisms as input since the
single orbit $S \in \autg$ we want to explore is sufficient. Recall that in
Lemma~\ref{lem:transform1}, for a given set {$T \subseteq S$} and temporal walk $W$
that visits $h$ vertices of $S$, there is a certain fraction of the vertices of
$S \setminus T$ that we are guaranteed to visit when applying some automorphism
$\sigma \in \auttg$ to $W$. This guaranteed fraction is the same as the expected
value of the fraction obtained by applying a random automorphism to $W$.}
Intuitively, we virtually construct a special version of the automorphism matrix
from Section~\ref{sec:aut-util} and uniformly at random choose a row~from~it.

To describe our algorithms we introduce some additional notation. Let
$\mathcal{G}$ be a connected temporal graph with lifetime $\ell$ and let $S \in
\torb$ be any orbit.  We define a
\emph{$(S, t, k)$-greedy walk $W$} as a temporal walk that starts at a given
time step {$t \in [\ell]$} at a vertex $u \in S$ and visits {$k \in [|S|]$} different vertices of
{$S$ such that each of the $k$ vertices is visited at the earliest
arrival time. More exactly, the second vertex $u'$} of the $k$ vertices is visited
by a temporal walk $W'$ starting at vertex $u$ in time step $t$ such that no
temporal walk exists that starts at vertex $u$ in time step $t$ and that visits
any vertex of $S \setminus \{u\}$ at an earlier time step (i.e., $W'$ is a {\em
foremost temporal walk}), and the next vertex $u'' \in S \setminus \{u, u'\}$ is
visited by a foremost temporal walk $W''$ starting at vertex $u'$ in time step
$t'$, and so on until the $k$ different vertices are visited. Intuitively, it is
a greedy strategy to visit $k$ different vertices of $S$ in a fast way. By
repeatedly applying Lemma~\ref{lem:nextreachable} until $k$ different vertices
are visited, we can upper bound the time steps required by any $(S, t,
k)$-greedy walk by $O(k(k n/|S| + r))$, where $r=|\torb|$ is the orbit number. 
{Note that the bound of Lemma~\ref{lem:nextreachable} is stated
as an upper bound for the earliest time step that $k$ different vertices are
reachable. The same idea}
was previously used in the proof of Theorem~\ref{thm:orbitexplore}. 
We refer to the upper bound of required time steps by any $(S, t, k)$-greedy walk
as the \emph{$(S, k)$-greedy number $\Delta_{(S, k)}$}. 
{Let $T \subseteq S$ with $p=(|S|-|T|)/|S|$.}
We define an
\emph{$(S, T, t, k)$-random walk $W'$} as a temporal walk that starts in time
step $t$ at a vertex $u \in S$, is expected to visit $\geq p k$ vertices of $S
\setminus T$ for any $k \leq |S|$, and is sampled from a
uniform distribution of all $(S, t, k)$-greedy walks. Note that the distribution has
an expected value of visiting $p k$ vertices of $S \setminus T$. Finally, we
define $R_{t, t'}(u)$ as the set of vertices reachable with a temporal walk
starting at a {vertex $u$} in time step $t$ and ending in time step $t'$,
for any given {$u \in V$  and} $t, t' \in [\ell]$.

We can compute foremost temporal walks using the results of Wu et
al.~\cite{WuCYSYW16}. They give various results (Algorithm~1-3) that can be
further simplified for our use case. In essence, their algorithm looks at all
temporal edges that exist in a given interval of time steps $[t, t']$ and
iteratively computes the set $R_{t, t'}(u)$ while maintaining a tree-like
structure that stores all computed temporal walks, which is all that we require
throughout this section. Their algorithm runs in linear time $O(\sum_{t'' \in
[t, t']} |E(G_{t''})|)$. Since $(S, t, k)$-greedy walks are simply an iterative
application of {their} algorithm, they can also be computed in linear time.
Algorithm~\ref{alg:one} shows a version of this with added randomization.

\begin{lemma}\label{lem:rand}
    Let $\mathcal{G}$ be a temporal graph with vertex set $V$, lifetime $\ell$
    and  \scalebox{0.95}{$S \in \torb$} any orbit. Via Algorithm~\ref{alg:one} we can construct an $(S, T,
    t, k)$-random walk in time $O(\sum_{t'' \in [t, t+\Delta_{(S, k)}]} |E(G_{t'})|)$ with
    $\Delta_{(S, k)} =O(k(k n/|S| + r))$ the $(S, k)$-greedy number and $r=|\torb|$.
\end{lemma} 

\begin{algorithm}[h!]
    \caption{Algorithm that constructs an $(S, T, t, k)$-random walk $W$ in time
    linear in the number of temporal edges that exist between time steps $t$
    and $t+\Delta_{(S, k)}$ of the input temporal graph $\mathcal{G}$,
    with $\Delta_{(S, k)}=O(k(k n/|S| + r))$ the $(S, k)$-greedy number
    and $r=|\torb|$.}
    \label{alg:one}
    \DontPrintSemicolon
    \SetKwInOut{Input}{In}
    \SetKwInOut{Output}{Out}

    \Input{Temporal graph $\mathcal{G}$ with lifetime $\ell$, time step $t \in
    [\ell]$, orbit $S \in \torb$, some $k \leq |S|$ and $T \subseteq S$. }

    \Output{$(S, T, t, k)$-random walk $W$.}

    Choose $u$ uniformly at random from $S$\\
    $X:=\{u\}$ \\
    $W:=\emptyset$\\
    $t' := t$\\
    \While{$|X| < k$}
    {
        $t''=$ minimum time step such that $(R_{t', t''}(u) \cap S) \setminus X \neq \emptyset$.\\
        Choose $v$ uniformly at random from $(R_{t', t''}(u) \cap S) \setminus X$\\
        $W':=$ temporal walk that starts at $u$ in time step $t'$
        and ends at vertex $v$ in time step $t''$ (earliest arrival time)\\
        $W := $ $W$ extended by $W'$\\
        $t':= t''$\\
        $X:=X \cup \{v\}$\\
        $u:=v$
    }
    \Return $W$
\end{algorithm}

\begin{proof} 
    Let $\mathcal{W}$ be the set of all $(S, t, k)$-greedy walks. We define a
    matrix $M$ such that each row is an entry $W \in \mathcal{W}$. Note that for
    any two vertices $u, v \in S$, the set of all $(S, t, k)$-greedy walks
    starting at $u$ is isomorphic to the set of all $(S, t, k)$-greedy walks
    starting at $v$.
    {For simplicity, we remove all columns of $M$ containing vertices of
     $V \setminus S$ as well as vertices of $S$ that occur a second, third, etc.\ time. 
     The reason for this is that we make no statements about these vertices in the
following.
     {Further note} that, by our automorphism funcion, there are no colums that
contain vertices of both, $V \setminus S$ and $S$. For the same reason, if a vertex
of some column $k'$ occurs a second time in a row, then all vertices of
column $k'$ occur for a second time in their row.}
%
%
%
    {Thus,}
    $M$ now {consists of} $k$ columns containing only vertices of $S$, with each row
    describing a sequence of $k$ vertices of $S$ visited {in the
    same order} by some $(S, t, k)$-greedy
    walk. Following the proof of Lemma~\ref{lem:transform1}, we know that each
    column contains $c$ copies of each vertex of $S$ for some $c > 0$ (i.e., all
    vertices have the same non-zero number of occurrences in each column). It
    follows that uniformly at random choosing a row of $M$ results in
    {a row} $\rho$ containing $p k$ vertices of $S \setminus T$ on average. We denote by
    $M_{u}$ for $u \in S$ the submatrix of $M$ restricted to rows containing the vertex $u$
    as the first entry. We extend this to submatrices $M_{(u_1, u_2, \ldots)}$
    restricted by arbitrary sequences $u_1, u_2, \ldots$ containing no duplicate
    vertices such that $M_{(u_1, u_2, \ldots)}$ contains $u_1$ in the first column
    and $u_2$ in the second column, and so forth.
    Note that all possible such non-empty submatrices for sequences of the same
    length are isomorphic to each other. We now show that Algorithm~\ref{alg:one}
    effectively chooses a row of $M$ uniformly at random. The algorithm begins by
    choosing uniformly at random a vertex $u \in S$ as the start vertex. This is
    equivalent to constructing the matrix $M_u$. Since $u$ is chosen uniformly at
    random, all submatrices $M_{v}$ for $v \in S$ have the same probability of being chosen
    in this fashion.
    
    The next step of Algorithm~\ref{alg:one} is to construct a
    temporal walk $W'$ that visits a vertex of $S \setminus \{u\}$, in particular
    with the earliest arrival time. Let $S'$ be the non-empty set of vertices
    reachable at the earliest time steps by a temporal walk starting from $u$ at
    time step $t$. The matrix $M_u$ contains exactly all vertices of $S'$ in its
    second column, and each vertex of $S'$ is contained the same number of times in
    this column. Uniformly at random choosing a vertex $u' \in S'$ is then
    equivalent to choosing a submatrix $M_{(u, u')}$ such that every row in $M_u$
    has the same probability of being part of $M_{(u, u')}$.
    Each step of the algorithm repeats this to construct submatrices with longer
    sequences until a sequence of $k$ different vertices is chosen. At this
    point the submatrix contains only one row, chosen uniformly at random from
    the entire matrix $M$. This gives us the desired $(S, T, t, k)$-random walk
    $W$. 
    
    Finally, we analyze the runtime. Because the resulting temporal walk
    $W$ is an $(S, t, k)$-greedy walk, the number of time steps it spans is
    bounded by the $(S, k)$-greedy number $\Delta_{(S, k)} =O(k(k n/|S| + r))$ with
    $r=|\torb|$. The algorithm can easily be implemented by iteratively
    constructing the sets of reachable vertices for each considered time step,
    i.e., one implements this via a simple algorithm that iterates over all
    temporal edges at each considered time step~\cite{WuCYSYW16}, the runtime of
    the algorithm is $O(\sum_{{t'} \in [t, t+\Delta_{(S, k)}]} |E(G_{t'})|)$.
\end{proof}

    By repeatedly running the randomized algorithm of Lemma~\ref{lem:rand}, one can
achieve the following lemma. 
{
\begin{lemma}\label{lem:randlog} Let $\mathcal{G}$ be a connected temporal
    graph with lifetime $\ell$, let $S \in \torb$ be any orbit, and let $T
    \subseteq S$ with $p=(|S|-|T|)/|S|$. For $k \leq |S|/2$,  $0 < \epsilon < 1$ and
    $r=|\torb|$ the orbit number, we
    can construct an $(S, T, t, 2k)$-random walk $W$ that visits $\geq p k$
    vertices of $S \setminus T$ with probability at least $1-\epsilon$. 
    The walk $W$ ends in time step $t'=t+\Delta_{(S, 2k)}$, with
    $\Delta_{(S, 2k)}=O(k(k n/|S|+r))$ being the $(S, 2k)$-greedy number.
    We construct $W$ by 
    sampling 
    {$\lceil \log\epsilon/\log(1-p/2) \rceil$-many
    $(S, T, t, 2k)$-random walks
    and returning one
    that visits $\geq pk$ vertices of $S\setminus T$, or indicating
    that no such walk was found among the samples}.
    The runtime of this procedure is $O((\sum_{t'' \in [t, t']}|E(G_{t''})|)
\log \epsilon/\log(1-p/2))$. 
\end{lemma}
\begin{proof}
    Let $k' \leq |S|$ be an integer and $0 < \theta < 1$.
    We begin by showing that the
    probability of an $(S, T, t, k')$-random walk $W'$ visiting at least $\theta
    pk'$ vertices of $S\setminus T$ is at least $(1-\theta)p$. 
    Let $\mathcal{W}$ be the set of all $(S, t, k')$-greedy walks
    from which we
    sample a walk to obtain an $(S, T, t, k')$-random walk. We call a walk $W'
    \in \mathcal{W}$ a \emph{bad walk} if $W'$ visits $< \theta p k'$ vertices
    of $S\setminus T$, and a \emph{good walk} if $W'$ visits $\geq \theta p k'$
    vertices. 

    Assume that {the 
    vertices} of $T$ are red, and the vertices of $S\setminus T$ are
    green. In each $W' \in \mathcal{W}$ we have $k'$ colored vertices. 
    %
    Let $w=|\mathcal{W}|$. In total over all {walks} in $\mathcal{W}$, we have $pk'w$ green and $(1-p)k'w$
    red vertices. A bad walk must have less than $\theta pk'$ green vertices,
    and thus at least $k'-\theta pk'=(1-\theta p)k'$ 
    red vertices. Thus, the
    number of bad walks is upper bounded by {$\frac{(1-p)k' w}{(1-\theta p)k'}$},
    which can be simplified as follows:
    \begin{align*}
        w\frac{(1-p)k'}{(1-\theta p)k'}  &= w\frac{1-p}{1-\theta p} \\
                                        &= w\frac{1-\theta p - (1-\theta)p}{1-\theta p} \\
                                        &= w(1-\frac{(1-\theta)p}{1-\theta p}) \\
                                        &\leq w(1-(1-\theta)p)
    \end{align*}
    Thus, we have at most $w(1-(1-\theta)p)$ bad walks, and at least
    $w(1-\theta)p$ good walks. It follows that the probability of an $(S, T,
    t, k')$-random walk visiting $\geq \theta pk'$ vertices of $S\setminus T$
    is $\geq (1-\theta) p$. Choosing $\theta = 1/2$ and $k'=2k$ we have that an
    $(S, T, t, 2k)$-random walk visits $\geq pk$ vertices of $S \setminus T$
    with probability at least $p/2$.

    We now analyze the number $N$ of $(S, T, t, 2k)$-random walks that we have to construct to get a
    probability of $1-\epsilon$ of {at least one sampled walk visiting $\geq pk$} vertices of $S\setminus
    T$. Let $W_1, \ldots, W_N$ be a set of sampled {$(S, T, t, 2k)$}-random walks.
    The probability that at least one of the $N$ temporal walks $W_i$ for $i \in
    [N]$ visits at least $pk$ different vertices is at least $1 - (1-p/2)^N$.
    So we have to solve $1 - (1-p/2)^N \geq 1 - \epsilon$, {
    or equivalently
    $(1-p/2)^N \leq \epsilon$ or $N\geq \log(\epsilon)/\log(1-p/2)$.}
%

    Thus, {$N = \lceil \log \epsilon/\log(1-p/2) \rceil$} many
    samples are sufficient to achieve our desired probability and our stated
    runtime is a result of repeating Algorithm~\ref{alg:one} at most $N$ times.
\end{proof}
}
{We now present a simple corollary that gives us the temporal walk
    of Lemma~\ref{lem:randlog} with probability $1$, but only expected runtime.
    First we set $\epsilon=1/2$. In this way we get a walk
    that visits $pk$ vertices of $S\setminus T$ with probability $1/2$. 
    We expect to run the algorithm of the lemma twice so that
    we get an
    expected runtime of $O((\sum_{t'' \in [t, t']} |E(G_t)|)/ \log(1/(1-p/2)))$.

\begin{corollary}\label{cor:expfraction}
    Let $\mathcal{G}$ be a connected temporal graph with lifetime $\ell$, let
    $S \in \torb$ be any orbit, and let $T \subseteq S$ with $p=(|S|-|T|)/|S|$.
    For $k \leq |S|/2$ and $r=|\torb|$ the orbit number, we can construct a
    temporal walk $W$ that visits $\geq p k$ vertices of $S \setminus T$. The
    walk $W$ ends in time step $t'=t+\Delta_{(S, 2k)}$ where $\Delta_{(S,
    2k)}=O(k(k n/|S|+r))$ is the $(S, 2k)$-greedy number. 
    {In expectation, we construct $W$ by 
    sampling $O(1/\log(1/(1-p/2)))$ many $(S, T, t, 2k)$-random walks, i.e.,
    we have}
    an expected running time of $O((\sum_{t'' \in [t, t']}|E(G_{t''})|)
    /\log(1/(1-p/2))$.
\end{corollary}
}

Finally, we present our randomized algorithm for computing the temporal walk of
Theorem~\ref{thm:orbitexplore}, i.e., exploring all vertices of a given orbit
$S$. Our randomized algorithm computes the temporal walk with high probability
in deterministic time. Recall that the construction scheme used
for the temporal walk of Theorem~\ref{thm:orbitexplore} consists of
$O(n^{2/3}\log n)$ iterative phases. In each phase, there are two subroutines
that are executed. First, the set of reachable vertices is expanded, starting at
the current time step $t$ and at the current vertex $u \in S$, until we are able
to construct a temporal walk to {every vertex} of the orbit $S$. Let $t'$ be the
earliest such time step. To concretely compute this set of reachable vertices
(and time step $t'$), we can use the algorithm of~\cite{WuCYSYW16}, which runs
linearly in the number of temporal edges that exist in the time span $[t, t']$.
The second phase then computes a temporal walk $W'$ from an arbitrary vertex $u
\in S$ starting at time step $t'$ that visits a certain number of vertices of
$S$, not necessarily previously unvisited. {The algorithm of Theorem~\ref{thm:orbitexplore} then
iterates} over all
automorphisms and {applies} them to $W'$ until the resulting temporal walk $W''$
visits enough previously unvisited vertices. We then compute a temporal walk
$W_u$ starting at vertex $u$ in time step $t$ and ending at the first vertex of
$W''$ in time step $t'$. We then concatenate $W''$ with the temporal walk $W_u$.

Now, instead of iterating over all automorphisms to compute the path $W''$, we
use Lemma~\ref{lem:randlog} to compute 
{a suitable}
temporal walk $W''$ with
high probability. {To achieve a high probability once only few
    vertices remain unvisited, we require a large amount of samples. E.g.,
    once there are only a constant number of unvisited vertices, we would
    require $\Omega(n)$ samples. To mitigate this, we make use of the fact
    that, if there are $O(n^{2/3})$ vertices left to be visited, we can visit them with
a trivial deterministic algorithm that computes a walk that requires $O(n^{5/3})$
time steps.}
The following theorem summarizes the result, and a detailed
analysis regarding the probability of success is contained in the proof.

\begin{theorem}\label{thm:randorbitexplore} Let $\mathcal{G}$ be a temporal
    graph with lifetime $\ell$ and vertex set $V$. Take $S \in \torb$ and
    $r=|\torb|$ the orbit number. For any $t\in[\ell]$, there exists a
    temporal walk $W$ starting at time step $t$ that visits all vertices of
    $S$ with probability $1-\epsilon$ and ends at time step
    $t'=t+O((n^{5/3}+rn)\log n)$. The temporal walk $W$ can be computed in
    time $O((\sum_{t'' \in [t, t']} |E(G_{t''})|) n^{1/3}\log (n/\epsilon))$, {i.e.,
    in time $O( M_{t,t'} \ n^{1/3} \log (n/\epsilon))$ where $M_{t,t'}$ is the total number
    of time edges of snapshots $G_t, \ldots, G_{t'}$.}
\end{theorem} 
\begin{proof} 
    Note that we assume that $S$ contains $\Omega(n^{2/3})$ vertices since,
    otherwise, we can visit all vertices of $S$ with a trivial deterministic algorithm that
    computes a walk that spans $O(n^{5/3})$ time steps. As mentioned, the
    algorithm of Theorem~\ref{thm:orbitexplore} consists of $O(n^{2/3}\log
    n)$ phases. In each phase, a temporal {walk $W'$} is constructed that visits
    $k\geq \lceil n^{1/3} \rceil$ vertices of $S$ such that {$W'$ visits} at least
    $pk$ vertices of $S \setminus T$ where $T$ is the set of previously visited
    vertices of $S$ and $p=(|S|-|T|)/|S|$. The number of phases is chosen so
    that only a constant number of vertices remain in $S\setminus T$. We modify
    the algorithm slightly by using at most $\pi(n)=c n^{2/3} \log n$ phases
    for some constant $c$ such that $\Theta(n^{2/3})$ unvisited vertices
    remain, which
    we can then visit by a trivial deterministic algorithm. Thus, we can  bound the value 
    {$p=(|S|-|T|)/|S| \geq 2/n^{1/3}$} throughout all phases.

    To find a partial temporal walk that visits $pk$ vertices {of $S\setminus T$} in each phase, we
    run the algorithm of Lemma~\ref{lem:randlog}. Recall that the algorithm
    samples {$\lceil \log \epsilon'/\log(1-p/2) \rceil$ many $(S, T, t, 2k)$-random
    walks} such that the
    probability that at least one such sample visits $pk$ vertices of $S\setminus T$ is $1 -
    \epsilon'$. Since we have $\pi(n)$ phases, {the probability of succeeding 
    in all}
    phases is $f(\pi(n))=(1-\epsilon')^{\pi(n)}$. 
    To get a probability of {$(1-\epsilon)$ as stated in this theorem}, we 
    choose
    $\epsilon'=\epsilon/\pi(n)$ since then we can show by a simple mathematical
    analysis that 
    $f(\pi(n))=(1-\epsilon/\pi(n))^{\pi(n)} \geq 1-\epsilon$ for $\pi(n)
    = 1$ and, since $f$ is a monotonely increasing function, for all $\pi(n) >
    1$.

    We now analyze the number of samples drawn by each execution of
    Lemma~\ref{lem:randlog} with our chosen value for $\epsilon'$ and our bound
    for $p$. We {draw $\lceil \log (\epsilon')/\log(1-p/2) \rceil$} 
    samples.
    {In the following we assume without loss of generality that $n>1$}. By the Mercator series, $\log (1-x) = -x -x^2/2 -x^3/3 - \ldots< -x$ for all 
    $0<x<1$, we can simplify our bound on the number of samples as follows. 
    \begin{align*}
         \lceil \log (\epsilon')/\log({1-p/2}) \rceil &\leq \lceil \log (\epsilon/\pi(n))/\log(1-1/n^{1/3})\rceil\\
                                              &= - \lceil \log (\pi(n)/\epsilon)/\log(1-1/n^{1/3})\rceil\\
                                                 &< - \lceil \log (\pi(n)/\epsilon)/(-1/n^{1/3})\rceil\\
                                                 &=  \lceil \log (\pi(n)/\epsilon)n^{1/3} \rceil\\
                                                 &=  \lceil \log (c \cdot n^{2/3} \log n/\epsilon) n^{1/3} \rceil\\
                                                 &{< \lceil \log (c \cdot n/\epsilon) n^{1/3} \rceil}
    \end{align*}
    We now show how we arrive at our final runtime. In each phase we use
    Lemma~\ref{lem:randlog} to draw {$\lceil \log (c \cdot n/\epsilon) n^{1/3} \rceil$} samples. Each of
    these samples is a temporal walk that visits $k$ vertices of $S$ in $O(k(k
    n/|S|+r))$ time steps, computed by a linear-time {consideration} of the snapshots that
    exist in the respective time span {(Algorithm~\ref{alg:one})}. In total, we have a runtime of
    $O((\sum_{t' \in [t, t'']} |E(G_{t''})|) \log(n/\epsilon) n^{1/3})$ for the
    computation of all required samples. 
\end{proof}

{ Using Corollary~\ref{cor:expfraction} we can easily modify the
    previous theorem so that the probability of success is $1$, with the caveat
    of an expected runtime. Instead of using Lemma~\ref{lem:randlog} to
    construct our random walks in each of the $O(n^{2/3}\log n)$ phases of the
    previous theorem, we use Corollary~\ref{cor:expfraction} instead. 

\begin{corollary}\label{cor:expectedorbitexplore} 
    Let $\mathcal{G}$ be a temporal graph with lifetime $\ell$ and vertex set
    $V$. Take $S \in \torb$ and $r=|\torb|$ the orbit number. For any
    $t\in[\ell]$, there exists a temporal walk $W$ that starts at time step
    $t$, visits all vertices of $S$ and ends at time step
    $t'=t+O((n^{5/3}+rn)\log n)$. The temporal walk $W$ can be computed 
    in
    $O((\sum_{t'' \in [t, t']} |E(G_{t''})|) n^{1/3})$ expected time, {i.e.,
    in $O( M_{t,t'} \ n^{1/3})$ expected time where $M_{t,t'}$ is the total number
    of time edges of snapshots $G_t, \ldots, G_{t'}$.}
\end{corollary}
\begin{proof}
    In each phase we expect to perform $O(1/\log(1/(1-p/2)))$ linear-time
    {considerations} of the snapshots that exist during the respective phases. With the
    bound $p \geq 2/n^{1/3}$ described in the proof of the previous theorem,
    this can be simplified to $O(n^{1/3})$, which will be shown in more detail
    below. The arguments follow the equation used in the proof of the previous
    theorem, i.e., we use the Mercator series.
    \begin{align*}
        1/\log(1/(1-p/2))                     &= (-1)/\log(1-p/2)\\
                                            &< (-1)/(-p/2)\\
                                            &\leq (-1)/(-1/n^{1/3})\\
                                            &=  n^{1/3}
    \end{align*}
\end{proof}
}
}
\section{Conclusions \& Future Work}
In this work, we looked at temporal graphs where agents know the complete
information of the temporal graph ahead of time. In this clairvoyant setting,
we studied the temporal exploration problem ($\texp$) and showed how to bound
the exploration time of a temporal graph using the structural graph property of
the number of orbits of the automorphism group of the temporal graph.
Additionally, we formalized the problem of asymmetric rendezvous in this
setting as the temporal rendezvous problem ($\trp$) and showed how to adapt our
ideas for $\texp$ to solve $\trp$ quickly. For both $\texp$ and $\trp$ we
provided lower bounds such that the gap between upper and lower bounds is
$O(n^\epsilon)$ for any fixed $\epsilon > 0$. 
\full{Finally, we gave a randomized algorithm to construct a temporal
    walk $W$ that visits all vertices of a given orbit with probability at least $1-\epsilon$
    for any $0<\epsilon<1$
    such that $W$ spans $O((n^{5/3}+rn)\log n)$ time steps. The runtime of this
    algorithm consists of $O(n^{1/3} \log (n/\epsilon))$ linear-time {considerations} of the
snapshots that exist in the aforementioned time span. For this, we also gave an alternative
algorithm that has an expected runtime that consists of $O(n^{1/3})$ linear-time {considerations},
and provides the specified temporal walk with probability $1$.}

There are several ways in which our work can be extended. One line of research
for both problems is to reduce the gap between the lower and upper bounds by
improving either of them. A second line of work is to study the symmetric variant
of rendezvous in the given setting and see if something can be said about it. Another
interesting situation to explore is when multiple agents are used to explore the
temporal graph (and also if multiple agents need to perform temporal rendezvous)
and how much faster solutions in these scenarios might be.
Lastly, a possible avenue of research is to study the structural properties
provided by automorphism groups and how they can be used to tackle other problems
that concern temporal graphs.
\bibliography{main}
\app{
\newpage
\appendix
\setcounter{theorem}{1}
\section{Same Number of Automorphisms}
\begin{lemma}
    Let $S \in \torb$ be any orbit. Then
    $|\auttg[u,{x_1}]|=|\auttg[u,{x_2}]|$ for any $u, x_1, x_2 \in
    S$.
\end{lemma}
\proofnumautos

\section{Applying Automorphisms to Temporal Walks}
\begin{lemma}\label{lem:transform1} Let $\mathcal{G}$ be a connected temporal
    graph with lifetime $\ell$ and let $S \in \torb$ be any orbit. Let $T
    \subseteq S$ with $p=(|S|-|T|)/|S|$ and $W=(u_1, u_2, \ldots, u_x)$ a
    temporal walk starting at time step $t$ and ending at $t'$ with $t, t' \in
    [\ell]$ and $u_1 \in S$ such that $W$ visits $k$ vertices of $S$. Then there
    exists a temporal walk $W'$ starting at a vertex $u' \in S$ in time step $t$
    and ending at time step $t'$ that visits at least $p k$ vertices of $S
    \setminus T$.
\end{lemma}
\prooftransformone

\section{Applying Automorphisms with Restricted Start Vertices}
\begin{corollary} Let $\mathcal{G}$ be a connected
    temporal graph with lifetime $\ell$ and let $S \in \torb$ be any orbit. Let
    $T \subsetneq S$ and $W$ a temporal walk starting at time step $t$ and
    ending at $t'$ with $t, t' \in [\ell]$ such that the first vertex of $W$ is
    in $S$ and such that $W$ visits $k$ different vertices of $S$. For any $X
    \subseteq S$ with $|X| > |T|$ there exists a temporal walk $W'$ starting at
    a vertex $u' \in X$ at time step $t$ and ending at time step $t'$ that
    visits at least $(c-1)/c \cdot k$ vertices of $S \setminus T$, with
    $c=|X|/|T|$.
\end{corollary}
\prooftransformtwo

\section{Vertices in the Same Orbit and Their Incident Boundary Edges}
\begin{lemma}\label{lem:orbitedges} {Let $G_t$ be the graph at time
    step~$t$ in a connected temporal graph~$\mathcal{G}$ and $S,S' \in \torb$,
    and let $G'$ be the subgraph of $G_t$ that contains only orbit boundary
    edges. Then all vertices in $S$ have the same degree in the bipartite graph
    $G'[S \cup S']$.}
\end{lemma}
\prooforbitedges

\section{Reachability between Orbits}
\begin{lemma} Let
    $\mathcal{G}$ be a connected temporal graph with lifetime $\ell$ and $S \in
    \torb$. For any $X \subseteq S$ and $S' \in \torb$ it holds that $|L_{t,
    t'}(X) \cap S'| \geq  \lceil |X| \cdot |S'|/|S| \rceil$ for any $t \in
    [\ell]$ and $t'=t+r$, where $r=|\torb|$ is the orbit number.
\end{lemma}
\prooflaneorbit

\section{Reaching $h$ Vertices inside a Single Orbit}
\begin{lemma} 
Let $\mathcal{G}$ be a connected temporal graph with lifetime $\ell$ and vertex
set $V$. Let $S \in \torb$ and let $r=|\torb|$ be the orbit number. For any $h
\leq |S|$, start vertex {$u \in S$} {and start time $t$}, there exists a set $X
\subset S$ {with $|X|=h$} such that we can reach any vertex in $X$ in at most
$O(\min\{h \cdot n / |S|, h r\}+r)$ time steps. That is, for every vertex $u'$
of $X$, we have a temporal walk starting at $u$ at time step $t$ and ending at
$u'$ at time step $t'$ with $t'-t=O(\min\{h \cdot n / |S|, h r\}+r)$.
 \end{lemma}
\proofnextreachable

\section{Visiting All Vertices inside a Single Orbit}
\begin{theorem} Let $\mathcal{G}$ be a temporal graph
    with lifetime $\ell$ and vertex set $V$. Take $S \in \torb$ and $r=|\torb|$
    the orbit number. For any $t\in[\ell]$ there exists a temporal walk $W$
    starting at time step $t$ that visits all vertices of $S$ and ends at time
    step $t'$ with $t'-t=O((n^{5/3}+rn)\log n)$. 
\end{theorem}
\prooforbitexplore

\section{Visiting a Guaranteed Fraction of Vertices in an Orbit}
\begin{lemma} Let $\mathcal{G}$ be a connected
    temporal graph with lifetime $\ell$ and vertex set $V$. Let $S \in \torb$
    and let $r=|\torb|$ be the orbit number. For any $t\in[\ell]$ and any
    {$\ww \in S$} there exists a temporal walk $W$ that starts at vertex
    $\ww$ in time step $t$ and visits a fraction $1/c$ (for any $1 < c <\
    |S|$) of the vertices of $S$ such that $W$ spans $O(rc(|S|/c)^{\phi(c)}
    \log |S|)$ time steps, with $\phi(c)=1/(\log {f(c)})$ and
    $f(c)=(1+(c-1)/c)$.
\end{lemma}
\proofvtexplorationub

\section{Visiting a Constant Fraction of Vertices in an Orbit}
\begin{corollary}\label{cor:tinyfraction} Let $\mathcal{G}$ be a temporal graph
    with lifetime $\ell$ and vertex set $V$. Let $S \in \torb$ and $r=|\torb|$
    be the orbit number. For any $t\in[\ell]$, any $u \in S$, and any fixed
    $\epsilon > 0$,  there exists a temporal walk $W$ that starts at vertex $u$
    in time step $t$ and visits some constant fraction $\alpha < 1$  
    of the vertices of
    $S$ such that 
    $W$ spans $O(r|S|^{1+\epsilon})$ time steps.     
\end{corollary}
\prooftinyfraction

\section{Visiting all Vertices inside a Single Orbit Faster than Theorem~\ref{thm:orbitexplore}}
\begin{theorem} Let $\mathcal{G}$ be a temporal
    graph with lifetime $\ell$ and vertex set $V$. Let $S \in \torb$ and
    $r=|\torb|$ be the orbit number. For any $t\in[\ell]$, any $u \in V$, and
    any fixed $\epsilon > 0$, there exists a temporal walk $W$ that starts at
    vertex $u$ in time step $t$ and visits all vertices of $S$ such that $W$
    spans $O(r|S|^{1+\epsilon} + n \log |S|)$ time steps.
\end{theorem}
\proofepsilonexploration

\section{Visiting all Vertices of a Temporal Graph using Theorem~\ref{thm:epsilonexploration}}
\begin{corollary} Let $\mathcal{G}$ be a temporal
    graph with lifetime $\ell$ and vertex set $V$.
    For any fixed $\epsilon>0$, there exists a temporal walk
    $W$ that spans $O(r n^{1+\epsilon})$ time steps and visits all vertices of
    $V$, where $r=|\torb|$ is the orbit number.
\end{corollary}
\proofepsilonallorbits

\section{A Canonical Orbit to Meet}
\begin{lemma}\label{lem:meeting} Let $\mathcal{G}$ be a temporal graph with
    vertex set $V$ and lifetime $\ell$ and let $a_1, a_2$ be two label-oblivious
    agents. There exists a pair of programs $(p_1, p_2)$ assigned to $a_1$ and
    $a_2$, respectively, such that each agent computes the same integer coloring
    of $V$ and such that two vertices $u, v \in V$ have the same color exactly
    if $u, v$ are in the same orbit of $\mathcal{G}/{\text{Aut}(\mathcal{G})}$.
\end{lemma}
\proofmeeting

\setcounter{theorem}{14}
\section{Lower Bound for TEXP Based on the Number of Orbits}
\begin{lemma}
        For any $1\le r\le n$, there exist $n$-vertex instances of $\texp$ with
        orbit number $r$ that require $\Omega(rn)$ time steps to be explored. 
    \end{lemma}
\proornlowerbound

\section{Lower Bound for TRP}
\begin{theorem}
    For any two agents $a_1$ and $a_2$ with arbitrary deterministic programs,
    there exist instances of $\trp$ where the agents
    require $\Omega(n \log n)$ time steps to meet.
\end{theorem}
\proofrendezvouslb

\section{Lower Bound for TEXP in Single-Orbit Temporal Graphs}
\begin{corollary} There exist connected temporal graphs
    $\mathcal{G}$ with vertex set $V$, lifetime $\ell$ and a single orbit such
    that all temporal walks $W$ require $\Omega(n \log n)$ time steps to visit
    all vertices of $V$.
\end{corollary}
\proofexplorationlb
}
\end{document}